\documentclass[lettersize,journal]{IEEEtran}
\usepackage{amsmath,amsfonts}
\usepackage{algorithmic}
\usepackage{algorithm}
\usepackage{array}
\usepackage[caption=false,font=normalsize,labelfont=sf,textfont=sf]{subfig}
\usepackage{textcomp}
\usepackage{siunitx}
\usepackage{stfloats}
\usepackage{url}
\usepackage{amssymb}
\usepackage{verbatim}
\usepackage{graphicx}
\usepackage{cite}
\usepackage{bm}
\usepackage{amsthm}
\theoremstyle{plain}
\usepackage{bbold}
\newtheorem{assumption}{Assumption}
\newtheorem{lemma}{Lemma}
\newtheorem{thm}{Theorem}

\newtheorem{coro}{Corollary}
\newtheorem{prop}{Proposition}

\hyphenation{op-tical net-works semi-conduc-tor IEEE-Xplore}
% updated with editorial comments 8/9/2021

\begin{document}

\title{Over-the-Air Federated Multi-Task Learning via Model Sparsification and Turbo Compressed Sensing}

\author{Haoming~Ma,
	Xiaojun~Yuan,~\IEEEmembership{Senior Member,~IEEE,}
	Zhi~Ding,~\IEEEmembership{Fellow,~IEEE,}
	Dian~Fan,
	Jun~Fang,~\IEEEmembership{Senior Member,~IEEE}
	
	\thanks{H. Ma, X. Yuan, D. Fan and J. Fang are with the National Key Laboratory of Science and Technology on Communication, the University of Electronic Science and Technology of China, Chengdu, China (e-mail:hmma@std.uestc.edu.cn; xjyuan@uestc.edu.cn; df@std.uestc.edu.cn; junfang@uestc.edu.cn). Z. Ding is with the Department of Electrical and Computer Engineering, University of California at Davis, Davis, CA 95616 USA (e-mail:zding@ucdavis.edu). The corresponding author is Xiaojun Yuan.
	}
}

% The paper headers
%\markboth{Journal of \LaTeX\ Class Files,~Vol.~14, No.~8, August~2021}%
%{Shell \MakeLowercase{\textit{et al.}}: A Sample Article Using IEEEtran.cls for IEEE Journals}

%\IEEEpubid{0000--0000/00\$00.00~\copyright~2021 IEEE}
% Remember, if you use this you must call \IEEEpubidadjcol in the second
% column for its text to clear the IEEEpubid mark.

\maketitle

\begin{abstract} 
	To achieve communication-efficient federated multi-task learning (FMTL), we propose an over-the-air FMTL (OA-FMTL) framework, where multiple learning tasks deployed on edge devices share a non-orthogonal fading channel under the coordination of an edge server (ES). In OA-FMTL, the local updates of edge devices are sparsified, compressed, and then sent over the uplink channel in a superimposed fashion. The ES employs over-the-air computation in the presence of inter-task interference. More specifically, the model aggregations of all the tasks are reconstructed from the channel observations concurrently, based on a modified version of the turbo compressed sensing (Turbo-CS) algorithm (named as M-Turbo-CS). We analyze the performance of the proposed OA-FMTL framework together with the M-Turbo-CS algorithm. Furthermore, based on the analysis, we formulate a communication-learning optimization problem to improve the system performance by adjusting the power allocation among the tasks at the edge devices. Numerical simulations show that our proposed OA-FMTL effectively suppresses the inter-task interference, and achieves a learning performance comparable to its counterpart with orthogonal multi-task transmission. It is also shown that the proposed inter-task power allocation optimization algorithm substantially reduces the overall communication overhead by appropriately adjusting the power allocation among the tasks.
\end{abstract}

\begin{IEEEkeywords} 
	Federated multi-task learning, over-the-air computation, turbo compressed sensing.
\end{IEEEkeywords}

\section{Introduction}

\IEEEPARstart{W}{ith} the availability of a massive amount of data at mobile edge devices, there is a growing interest in providing artificial intelligence (AI) services, such as computer vision \cite{he_deep_2016} and natural language processing \cite{young_recent_2018}, at the edge of wireless networks. To utilize these data, conventional machine learning (ML) requires edge devices to upload their local data to a central node for model training. However, uploading such a huge volume of data by wireless communication incurs a huge cost of communication resources and compromises data privacy. To address these issues, federated learning (FL) has emerged as a popular framework for model training in a distributive and confidential manner \cite{goetz_active_2019}. In the FL framework, each edge device trains its local model based on its local data; and then transmits its local model parameters or gradients to an edge server (ES). The ES updates its global model parameters via model aggregation; and then broadcasts the updated global model to the edge devices. Compared with centralized learning, FL significantly relieves the communication burden and reduces the risk of data breaches.

In federated edge learning, a huge amount of model parameters need to be uploaded from massive distributed edge devices to the ES (referred to as uplink), where limited uplink channel resources (e.g., bandwidth, time, and space) become a critical bottleneck of efficient communication. To achieve communication-efficient FL, extensive research effort has been devoted to the uplink communication design \cite{kairouz_advances_2021,lin_deep_2020,wangni_gradient_2018,sattler_sparse_2019,konecny_federated_2017,zhu_broadband_2020,amiri_machine_2020,amiri_federated_2020}. The authors in \cite{lin_deep_2020,wangni_gradient_2018} pointed out that the local model parameters or gradients can be sparsified, compressed, and quantized before transmission to reduce the uplink communication cost without causing substantial losses in model accuracy. The authors in \cite{sattler_sparse_2019,konecny_federated_2017} proposed further improvements to reduce the number of communication rounds required for convergence. Different from digital model uploading in \cite{lin_deep_2020,wangni_gradient_2018,sattler_sparse_2019,konecny_federated_2017}, another popular uplink transmission strategy called over-the-air computation has emerged to support analog model uploading from massive edge devices \cite{zhu_broadband_2020}. Instead of orthogonal resource allocation among the devices to avoid interference, over-the-air computation allows the devices to share radio resources in model uploading by leveraging the signal-superposition property of analog transmission to conduct model aggregation over the air. To name a few, the authors in \cite{amiri_machine_2020,amiri_federated_2020} proposed gradient sparsification and compression prior to over-the-air transmission, where the model aggregation is reconstructed at ES via compressed sensing methods.

Besides, the distributive data among edge devices are typically neither independent nor identically distributed. This causes a major challenge called the statistical heterogeneity problem, which reduces the learning performance \cite{li_federated_2020}. To address this problem, the authors in \cite{smith_federated_2018} proposed the federated multi-task learning (FMTL) framework to implement multiple machine learning tasks over the FL communication network, motivated by the multi-task learning frameworks \cite{zhang_survey_2021,kumar_learning_2012}. Empirical results demonstrate that FMTL can significantly improve the generalizability of the trained models in the presence of statistical heterogeneity, where the knowledge contained in one task can be leveraged in training the models of the other tasks. Different from the method in  \cite{smith_federated_2018} only supporting a linear model or a linear combination of pre-trained models, the authors in \cite{dinh_fedu_2021,dinh_personalized_2021,li_ditto_2021} proposed other FMTL approaches to train more general non-convex learning models such as deep neural networks. To reduce the communication overhead, over-the-air computation is still an appealing solution to the uplink communication design of FMTL. Yet, the inter-task interference arises due to the concurrent transmissions of the model updates of multiple tasks. The inter-task interference, if not appropriately handled, will incur significant model aggregation errors that seriously degrade the learning performance.

To address the above challenge, in this paper, we propose the over-the-air FMTL (OA-FMTL) framework to achieve communication-efficient FMTL in the presence of inter-task interference, where multiple learning tasks are deployed on edge devices and a single ES. The ES is dedicated to parameter learning of multiple tasks, and the model aggregations of all the tasks are concurrently conducted at the ES via over-the-air computation. More specifically, at each edge device, the local model updates of all tasks are first sparsified and compressed individually by following the approach in \cite{sattler_sparse_2019,lin_deep_2020}, and then superimposed and sent over the uplink channel. At the ES, the model aggregations of all the tasks are reconstructed from the channel observation data, based on a modified version of the turbo compressed sensing (Turbo-CS) algorithm \cite{ma_turbo_2014} (named as M-Turbo-CS). It is clear that the above non-orthogonal transmission of multi-task model updates reduces the required uplink communication resource, but at the cost of severe inter-task interference. We show that our proposed OA-FMTL framework with M-Turbo-CS is able to efficiently suppress the inter-task interference, thereby significantly reducing the communication overhead, as compared to its counterpart with orthogonal multi-task transmission.

We further analyze the performance of the proposed OA-FMTL framework together with the M-Turbo-CS algorithm. Specially, we establish the state evolution to characterize the behaviour of the M-Turbo-CS algorithm. We then develop an upper bound on the learning loss of the OA-FMTL framework by taking model aggregation errors introduced by model sparsification, model misalignment, as well as imperfect reconstruction by M-Turbo-CS, into account. Based on that, we formulate a communication-learning optimization problem to improve the system performance by adjusting the power allocation among multiple tasks on the edge devices. We show that, under mild conditions, the problem can be solved by a feasibility test together with bisection search.

Numerical simulations show that our proposed OA-FMTL effectively suppresses the inter-task interference, so as to achieve a learning performance comparable to the single-task counterpart scheme \cite{amiri_machine_2020,amiri_federated_2020}. In other words, the communication overhead of OA-FMTL is only approximately one $N$-th of the conventional orthogonal scheme with independent training of each task, where $N$ is the total number of tasks. Meanwhile, simulations also show that our proposed optimization algorithm further reduces the communication overhead by appropriately adjusting the power allocation among the tasks.

The remainder of this paper is organized as follows. In Section \ref{sec:system model}, we describe the FMTL model and the over-the-air model aggregation framework. In Section \ref{sec:system}, we give the specific implementation of the OA-FMTL framework, including operations at devices, operations at the ES, and the design of the M-Turbo-CS algorithm. In Section \ref{sec:opti}, we analyze the performance of OA-FMTL and formulate an optimization problem that minimizes the training loss. In Section \ref{optimization}, we develop an effective algorithm to solve the optimization problem. In Section \ref{sec:num}, we validate our proposed OA-FMTL framework with experiments.

\emph{Notations}: Throughout, we use $\mathbb{R}$ and $\mathbb{C}$ to denote the real
and complex number sets, respectively. Regular letters, bold small letters, and bold capital letters are used to denote scalars, vectors, and matrices, respectively. We use $(\cdot)^T$ and $(\cdot)^H$ to denote the transpose and the conjugate transpose, respectively. We use $\mathcal{CN}(\mu, \sigma^2)$ to denote the circularly-symmetric complex normal distribution with mean $\mu$ and covariance $\sigma^2$, $||\cdot||$ to denote the $l_2$ norm, $||\cdot||_F$ to denote the Frobenius norm, $\bm{I}_N$ to denote the $N\times N$ identity matrix, $\operatorname{tr}(\cdot)$ to denote the trace of a square matrix, $\operatorname{inv}(\cdot)$ to denote the inverse of a square matrix, $[k]$ to denote the integer set $\{1,\dots,k\}$, $\mathbb{R}^+$ to denote the set of nonnegative real numbers, $\circ$ to denote the Hadamard product, $\bm{A}\succeq 0$ to denote that the matrix $\bm{A}$ is positive semidefinite, $\mathbb{E}[\cdot]$ to denote the expectation operator, and $\operatorname{var}[a|b]=\mathbb{E}[|a-\mathbb{E}[a|b]|^2|b]$. 

%\IEEEPARstart{T}{his} file is intended to serve as a ``sample article file''
%for IEEE journal papers produced under \LaTeX\ using
%IEEEtran.cls version 1.8b and later. The most common elements are covered in the simplified and updated instructions in ``New\_IEEEtran\_how-to.pdf''. For less common elements you can refer back to the original ``IEEEtran\_HOWTO.pdf''. It is assumed that the reader has a basic working knowledge of \LaTeX. Those who are new to \LaTeX \ are encouraged to read Tobias Oetiker's ``The Not So Short Introduction to \LaTeX ,'' available at: \url{http://tug.ctan.org/info/lshort/english/lshort.pdf} which provides an overview of working with \LaTeX.
\section{System Model}\label{sec:system model}
\subsection{Federated Multi-Task Learning}
We consider an FMTL system with $N$ learning tasks deployed on $M$ wireless local devices with the help of an ES, where task assignment to the devices is flexibly determined according to the computation power and storage capability of each device, as depicted in Fig. \ref{sec:system,ssec:multi,fig:system}. Each task $n$ on device $m$ is associated with its local dataset $D_{nm}$. Following \cite{zhang_convex_2012,smith_federated_2018,dinh_fedu_2021}, the FMTL problem is defined as 
\begin{subequations}\label{sec:system,ssec:multi,equ:optimum function}
	\begin{align}
		\min_{\bm{\Theta}, \bm{\Omega}}\quad&\mathcal{L}(\bm{\Theta})+\kappa_1\operatorname{tr}(\bm{\Theta}\bm{\Theta}^{T})+\kappa_2\operatorname{tr}(\bm{\Theta}\bm{\Omega}^{-1}\bm{\Theta}^{T})\label{fmtl obj}\\
		s.t.\quad&\bm{\Omega}\succeq 0,\label{fmtl cont1}\\ \quad&\operatorname{tr}(\bm{\Omega})=1,\label{fmtl cont2}
	\end{align}
\end{subequations}
where $\bm{\Theta}=[\bm{\theta}_1,\dots,\bm{\theta}_N]\in\mathbb{R}^{d\times N}$ models the parameters of the $N$ tasks with $\bm{\theta}_n$ being the parameter vector of task $n$ and $d$ being the common length of the model parameter vectors of all the tasks; $\bm{\Omega}\in\mathbb{R}^{N\times N}$ models the correlation between the $N$ tasks and is initialized to $\frac{1}{N}\bm{I}_N$; and $\kappa_1, \kappa_2$ are both the regularization parameters. In (\ref{fmtl obj}), the first term is the empirical learning loss defined as
\begin{equation}\label{total loss}
	\mathcal{L}(\bm{\Theta})=\sum_{n=1}^{N} \mathcal{L}_{n}(\bm{\theta}_{n}),
\end{equation}
where $\mathcal{L}_{n}(\bm{\theta}_{n})$ is the empirical learning loss of each task $n$, defined by
\begin{equation}\label{sec:system,ssec:multi,equ:loss function}	
	\mathcal{L}_n(\bm{\theta}_{n})=\sum_{m=1}^{M} \mathcal{L}_{nm}(\bm{\theta}_{n});
\end{equation}
the local empirical loss function of task $n$ on device $m$ is defined by 
\begin{equation}\label{sec:system,ssec:multi,equ:loss sample}	
	\mathcal{L}_{nm}(\bm{\theta}_{n})=\sum_{k=1}^{K_{nm}}l_n(\bm{\theta}_{n};\bm{u}_{nmk}),
\end{equation}  
where $l_n(\bm{\theta}_{n}; \bm{u}_{nmk})$ is the sample-wise loss function specified by task $n$, $\mathbf{u}_{nmk}$ denotes the $k$-th local data sample of dataset $D_{nm}$, and $K_{nm}$ is the cardinality of $D_{nm}$ with $K_{nm}=0$ meaning that $D_{nm}$ is empty. The second term of (\ref{fmtl obj}) penalizes the complexity of $\bm{\Theta}$, and the third term measures the relationships between all tasks based on $\bm{\Theta}$ and $\bm{\Omega}$ \cite{zhang_convex_2012}. The constraint (\ref{fmtl cont1}) is due to the fact that $\bm{\Omega}$ is defined as the task covariance matrix \cite{zhang_convex_2012}, and the constraint (\ref{fmtl cont2}) is to restrict the complexity of $\bm{\Omega}$.

\begin{figure}[h] 	
	\centering 	
	\includegraphics[width=0.85\linewidth]{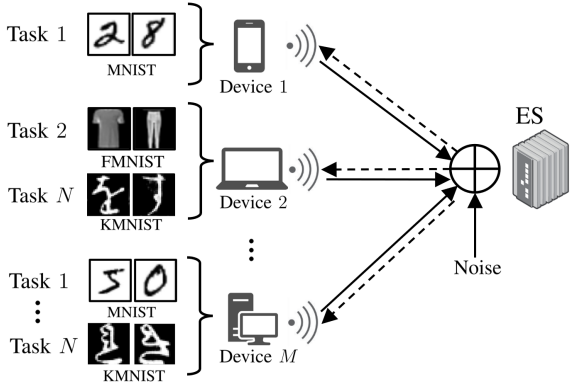} 	
	\caption{An illustration of the FMTL framework.} 	
	\label{sec:system,ssec:multi,fig:system}
\end{figure}

The minimization of (\ref{sec:system,ssec:multi,equ:optimum function}) is typically solved by the alternating descent method (ADM) \cite{zhang_convex_2012} in the FMTL framework over a wireless communication network. Specifically, at each communication round $t$, each device $m$ firstly uploads local gradients $\{\bm{g}_{nm}^{(t)}\}_{n=1}^N$ to the ES over a wireless uplink with $\bm{g}_{nm}^{(t)}=\nabla_n\mathcal{L}_{nm}(\bm{\theta}_n^{(t)})\in\mathbb{R}^d$ denoting the local gradient of task $n$ at device $m$, where $\nabla_n=\frac{\partial}{\partial \bm{\theta}_n}\in\mathbb{R}^d$ is the gradient operator with respect to $\bm{\theta}_n$. Then the ES optimizes the FMTL problem (\ref{sec:system,ssec:multi,equ:optimum function}) with respect to $\bm{\Theta}$ through a gradient-based update for a fixed $\bm{\Omega}$, i.e., the global model parameter $\bm{\Theta}$ is updated via
	\begin{align}\label{small update}
		\nonumber\bm{\Theta}^{(t+1)}= & \bm{\Theta}^{(t)}-\eta\left(\left[\sum_{m=1}^M\bm{g}_{1m}^{(t)},\dots,\sum_{m=1}^M\bm{g}_{Nm}^{(t)}\right]\right.\\
		&\left.+\ \kappa_1\bm{\Theta}^{(t)}+\kappa_2\bm{\Theta}^{(t)}{\bm{\Omega}^{(t)}}^{-1}\right),
	\end{align}	
where $\eta$ is a predetermined learning rate. With the definition of gradients $\bm{g}_{nm}^{(t)}$, (\ref{small update}) is equivalently rewritten as
\begin{equation}\label{update}
	\bm{\Theta}^{(t+1)}=\bm{\Theta}^{(t)}-\eta\left(\nabla \mathcal{L}(\bm{\Theta}^{(t)})+\kappa_1\bm{\Theta}^{(t)}+\kappa_2\bm{\Theta}^{(t)}{{\bm{\Omega}}^{(t)}}^{-1}\right),
\end{equation}
where $\nabla=\frac{\partial}{\partial \bm{\Theta}}\in\mathbb{R}^{d\times N}$ is the gradient operator with respect to $\bm{\Theta}$.
Since the FMTL problem (\ref{sec:system,ssec:multi,equ:optimum function}) is convex with respect to $\bm{\Omega}$ for a fixed $\bm{\Theta}$ \cite{zhang_convex_2012}, the ES updates the matrix $\bm{\Omega}$ via
\begin{equation}\label{update_relation}
	\bm{\Omega}^{(t+1)}=\frac{\left({\bm{\Theta}^{(t+1)}}^T\bm{\Theta}^{(t+1)}\right)^\frac{1}{2}}{\operatorname{tr}\left(\left({\bm{\Theta}^{(t+1)}}^T\bm{\Theta}^{(t+1)}\right)^{\frac{1}{2}}\right)}.
\end{equation}
After that, $\bm{\Theta}^{(t+1)}$ is broadcast to all the devices by the ES over a wireless downlink to synchronize the learning models among the devices. The iterative process in (\ref{update})-(\ref{update_relation}) continues until the learning tasks converge.

%Combining (\ref{sec:system,ssec:multi,equ:optimum function}) and (\ref{update}), the parameter segment $\bm{\theta}_{n}^{(t)}$ of each task $n$ is expected to be updated via
%\begin{subequations}\label{model_update} 
%	\begin{align}		\bm{\theta}_{n}^{(t+1)}&=\bm{\theta}_{n}^{(t)}-\eta\nabla_n \mathcal{L}_n(\bm{\theta}_{n})\\ 		\nonumber&=\bm{\theta}_{n}^{(t)} - \eta\frac{\sum_{m=1}^{M} K_{nm}\bm{g}_{nm}^{(t)}}{\sum_{m=1}^MK_{nm}}, \forall n \in [N],\tag{5b} 	
%	\end{align}
%\end{subequations}
%where the local gradient $\bm{g}_{nm}^{(t)}=\nabla_n \mathcal{L}_{nm}(\bm{\theta}_{n})\in\mathbb{R}^{d_n}$. 
%The following online groups are helpful to beginning and experienced \LaTeX\ users. A search through their archives can provide many answers to common questions.
%\begin{list}{}{}
%\item{\url{http://www.latex-community.org/}} 
%\item{\url{https://tex.stackexchange.com/} }
%\end{list}

\subsection{Over-the-Air Channel Model}
%See \cite{ref1,ref2,ref3,ref4,ref5} for resources on formatting math into text and additional help in working with \LaTeX .
We now describe the wireless channels used to support the above FMTL process. It is well known that the communication bottleneck of FL resides in the uplink gradient aggregation. Thus, following the convention in \cite{amiri_machine_2020,amiri_federated_2020,liu_reconfigurable_2021}, we assume error-free transmission in the downlink; and focus on the system design for the uplink. We model the wireless uplink as a block-fading multiple access (MAC) channel, where the channel state information (CSI) remains unchanged within $s$ channel uses. Since the update in (\ref{small update}) depends only on the sum of the local gradients, we employ the over-the-air computation for efficient model aggregation. Specifically, at the $t$-th communication round, the devices, each equipped with a signal antenna, send their signal to the ES over the block-fading MAC channel with $s$ channel uses, characterized by
\begin{equation}\label{sec:system,ssec:trans,equ:channel}
	\bm{r}^{(t)}=\sum_{m=1}^{M}h_m^{(t)}\bm{x}_m^{(t)} + \bm{w},	
\end{equation}
where $\bm{x}_m^{(t)}\in \mathbb{C}^s$ is the channel input vector from device $m$, $h_m^{(t)}\in\mathbb{C}$ is the channel gain from device $m$ to the ES, $\bm{r}^{(t)}\in \mathbb{C}^s$ is the channel output received by the ES, and $\bm{w} \in \mathbb{C}^s$ is an independent additive white Gaussian noise (AWGN) with each element independent and identically distributed as $\mathcal{CN}(0,\sigma_w^2)$. The power consumption of device $m$ at each round $t$ is constrained by 	\begin{equation}\label{sec:system,ssec:trans,equ:power constraint}
	\mathbb{E}[||x_{m,k}^{(t)}||^2] \leq P_m,
\end{equation}
where $x_{m,k}^{(t)}$ is the $k$-th entry of $\bm{x}_{m}^{(t)}$, and $P_m$ is the power budget of device $m$.

The remaining issue is to map the gradient vectors $\{\bm{g}_{nm}^{(t)}\}_{n=1}^N$ to the complex vector $\bm{x}_m^{(t)}$ at device $m$ in each communication round $t$, and to recover $\nabla \mathcal{L}(\bm{\Theta}^{(t)})$ in (\ref{update}) from $\bm{r}^{(t)}$ at the ES. As inspired by \cite{amiri_federated_2020,amiri_machine_2020,liu_reconfigurable_2021,ma_turbo_2014,seide_1-bit_2014}, we employ compressed sensing and error accumulation techniques to improve the communication efficiency. The details of the uplink transceiver design in the devices and the ES are presented in what follows.

\section{Proposed Transmission Scheme}\label{sec:system}
\subsection{Operations at Devices}\label{sec:system,ssec:trans}
To support the uplink transmission of the OA-FMTL framework, we process the local gradients of each task $n$ on device $m$, with the gradient sparsification scheme proposed in \cite{amiri_federated_2020} and a distinct gradient compression scheme proposed in this paper. Specifically, at each round $t$, device $m$ firstly adds $\bm{g}_{nm}^{(t)}$ defined above (\ref{small update}) with the error accumulation term $\bm{\Delta}_{nm}^{(t)}\in \mathbb{R}^{d}$ as
\begin{equation}\label{sec:system,ssec:trans,equ:error accumulated}
	\bm{g}_{nm}^{\operatorname{ac}(t)} = \bm{g}_{nm}^{(t)} + \bm{\Delta}_{nm}^{(t)},\forall m\in[M],\forall n\in[N],
\end{equation}
where $\bm{\Delta}_{nm}^{(t)}$ is accumulated in the previous rounds with $\bm{\Delta}_{nm}^{(1)}$ initialized to $\bm{0}$. Then device $m$ sets all the elements of $\bm{g}_{nm}^{\operatorname{ac}(t)}\in\mathbb{R}^{d}$ to zero except for the $k_n$ elements with the greatest magnitudes, defined by a mapping as
\begin{equation}\label{sec:system,ssec:trans,equ:sparse} 	
	\bm{g}_{nm}^{\operatorname{sp}(t)} = \operatorname{sp}(\bm{g}_{nm}^{\operatorname{ac}(t)}, k_n)\in\mathbb{R}^{d},\forall m\in[M],\forall n\in[N].
\end{equation}
$\bm{\Delta}_{nm}^{(t)}$ is updated by
\begin{equation}\label{sec:system,ssec:trans,equ:get error}
	\bm{\Delta}_{nm}^{(t+1)} = \bm{g}_{nm}^{\operatorname{ac}(t)} - \bm{g}_{nm}^{\operatorname{sp}(t)},\forall m\in[M],\forall n\in[N].
\end{equation}
Then $\bm{g}_{nm}^{\operatorname{sp}(t)}$ is normalized to $\bm{g}_{nm}^{\operatorname{no}(t)}\in\mathbb{R}^d$ by
\begin{equation}\label{normalization}
	\bm{g}_{nm}^{\operatorname{no}(t)}=\frac{\bm{s}_n\circ\bm{g}_{nm}^{\operatorname{sp}^{(t)}}}{v_{nm}^{(t)}},
\end{equation}
where ${v_{nm}^{(t)}}=\frac{1}{\sqrt{d}}||\bm{g}_{nm}^{\operatorname{sp}(t)}||$ is a normalization factor, and $\bm{s}_n\in\mathbb{R}^d$ is a random sign vector with the entries independently and uniformly drawn from $\{+1,-1\}$. Note that $\bm{s}_n$ ensures that the entries of $\bm{g}_{nm}^{\operatorname{no}(t)}$ have zero-mean, and $\bm{g}_{nm}^{\operatorname{no}(t)}$ of task $n$ is independent of the gradients of the other tasks. 

Next $\bm{g}_{nm}^{\operatorname{no}(t)}$ is compressed into a low-dimensional vector $\bm{g}_{nm}^{\operatorname{cp}(t)} \in \mathbb{R}^{2s}$ by a compression matrix $\bm{A}_n\in \mathbb{R}^{2s\times d}$ as
\begin{equation}\label{sec:system,ssec:trans,equ:compress}
	\bm{g}_{nm}^{\operatorname{cp}(t)} = \bm{A}_n\bm{g}_{nm}^{\operatorname{no}(t)},\forall m\in[M],\forall n\in[N],
\end{equation}
where each task $n$ is assigned with an individual compression matrix $\bm{A}_n$. Unlike the classic over-the-air FL framework \cite{amiri_federated_2020,amiri_machine_2020}, our proposed OA-FMTL framework needs to enable the concurrent reconstruction of the gradient aggregations of the $N$ tasks at the ES, where the randomness of the  compression matrices is introduced to help eliminate the inter-task interference. In specific, we employ a partial discrete cosine transform (DCT) matrix $\bm{A}_n=\bm{S}_n\bm{F}$ for each task $n$, where the selection matrix $\bm{S}_n\in \mathbb{R}^{2s\times d}$ consists of $2s$ randomly selected rows of the $d\times d$ identity matrix $\bm{I}_{d}$ and the $(m,n)$-th entry of the unitary DCT matrix $\bm{F}\in \mathbb{R}^{{d}\times {d}}$ is given by $\sqrt{\frac{2}{d}}\operatorname{cos}\left(\frac{(m-1)(2n-1)\pi}{2d}\right)$ when $m\neq1$, or $\sqrt{\frac{1}{d}}$ when $m=1$. Compared to other choices of the compression matrix such as the i.i.d. Gaussian matrix, the partial DCT matrix has advantages both in performance and complexity \cite{ma_performance_2015}.

We are now ready to describe the design of $\bm{x}_m^{(t)}$. As a distinct feature of the OA-FMTL framework, we propose to superimpose the local gradients of different tasks to support the multiplexing of the $N$ learning tasks. In specific, each device $m$ constructs
\begin{equation}\label{sec:system,ssec:trans,equ:mapping}
	\check{\bm{x}}_{m}^{(t)}=\sum_{n=1}^N \sqrt{\gamma_n^{(t)}}\bm{g}_{nm}^{\operatorname{cp}(t)}\in\mathbb{R}^{2s},
\end{equation}
where $\gamma_n^{(t)}\in\mathbb{R}^+$ is the inter-task power allocation coefficient of task $n$. Then $\check{\bm{x}}_{m}^{(t)}$ is converted into a complex vector $\tilde{\bm{x}}_{m}^{(t)}\in\mathbb{C}^{s}$, defined as
\begin{subequations}
	\begin{align}
		\operatorname{Re}\{\tilde{\bm{x}}_{m}^{(t)}\}&= \left[\check{x}_{m,1}^{(t)},\dots,\check{x}_{m,s}^{(t)}\right]^T, \\
		\operatorname{Im}\{\tilde{\bm{x}}_{m}^{(t)}\}&= \left[\check{x}_{m,s+1}^{(t)},\dots,\check{x}_{m,2s}^{(t)}\right]^T,
	\end{align}	
\end{subequations}	
where $\check{x}_{m,k}^{(t)}$ is the $k$-th entry of $\check{\bm{x}}_{m}^{(t)}$. After that, every device $m$ concurrently sends $\bm{x}_m^{(t)}=\alpha_m^{(t)}\tilde{\bm{x}}_{m}^{(t)}$ to the ES via analog transmission, where the power allocation cofficient $\alpha_m^{(t)}\in\mathbb{C}$ is given by
\begin{equation}\label{sec:system,ssec:trans,equ:power_allocation}
	\alpha_{m}^{(t)}=\frac{|h_m^{(t)}|}{h_m^{(t)}}\sqrt{\frac{P_m}{2}}.
\end{equation}
We see that the phase of $\alpha_m^{(t)}$ is opposite to the phase of $h_m^{(t)}$, and that with $\alpha_m^{(t)}$ in (\ref{sec:system,ssec:trans,equ:power_allocation}), the transmission power of every device meets the power budget. This avoids the hard alignment of local gradients and hence improves the system performance; see, e.g., \cite{zhong_over--air_2021} for more detailed discussions.

\subsection{Operations at ES}
We now describe the receiver design of the ES. We assume that the ES knows the CSI $\{h_m^{(t)}\}_{m=1}^M$ at each round $t$. In practice, the CSI can be acquired by the ES via channel training. With (\ref{sec:system,ssec:trans,equ:channel}), (\ref{normalization})-(\ref{sec:system,ssec:trans,equ:power_allocation}) and appropriate scaling, we construct the system model as
\begin{align}\label{sec:system,ssec:trans,equ:channel final}
	\bm{y}^{(t)} = 
	\begin{bmatrix}
		\bm{A}_1,\dots,\bm{A}_N
	\end{bmatrix}
	\begin{bmatrix}
		\bm{g}_1^{(t)^T},\dots,\bm{g}_N^{(t)^T}
	\end{bmatrix}^T + \bm{n},
\end{align}
where 
$\bm{y}^{(t)} =\frac{[\operatorname{Re}\{\bm{r}^{(t)}\}^T,\operatorname{Im}\{\bm{r}^{(t)}\}^T]^T}{\sqrt{\sum_{m=1}^M|h_m^{(t)}\alpha_m^{(t)}|^2}}$, $\bm{n} \triangleq \frac{[\operatorname{Re}\{\bm{w}\}^T,\operatorname{Im}\{\bm{w}\}^T]^T}{\sqrt{\sum_{m=1}^M|h_m^{(t)}\alpha_m^{(t)}|^2}}$ follows $\mathcal{N}(0,\sigma^2)$ with $\sigma^2 \triangleq \frac{\sigma_w^2}{2\sum_{m=1}^M|h_m^{(t)}\alpha_m^{(t)}|^2}$, and $\bm{g}_n^{(t)} \triangleq\frac{\sum_{m=1}^{M}\sqrt{\gamma_n^{(t)}}h_m^{(t)}\alpha_{m}^{(t)}\bm{g}_{nm}^{\operatorname{no}(t)}}{\sqrt{\sum_{m=1}^M|h_m^{(t)}\alpha_m^{(t)}|^2}}$. Given $\bm{y}^{(t)}$, for $\forall n\in [N]$, the ES reconstructs $\bm{g}_n^{(t)}$ as $\hat{\bm{g}}_n^{(t)}$ via the modified turbo compressed sensing algorithm detailed later in Section \ref{M-turbo-CS}. Then the ES computes the model updating vector $\check{\bm{g}}_n^{(t)}\in\mathbb{R}^d$ as
\begin{equation}\label{es con model updating}
	\check{\bm{g}}_n^{(t)}=\zeta_n^{(t)}\sqrt{\sum_{m=1}^M|h_m^{(t)}\alpha_m^{(t)}|^2}\bm{s}_n\circ\hat{\bm{g}}_n^{(t)}\in\mathbb{R}^d,
\end{equation}
where $\zeta_n^{(t)}\in\mathbb{R}$ is a normalization scaling factor of task $n$ to reduce the channel misalignment error caused by the incomplete channel reversion in (\ref{sec:system,ssec:trans,equ:power_allocation}), and the sign vector $\bm{s}_n$ defined below (\ref{normalization}) is known by the ES, e.g., via sharing the random seeds. Then, based on (\ref{update}), the model update of OA-FMTL in the presence of the communication error is given by
\begin{equation}\label{sec:system,ssec:multi,equ:paramenter update_new}
	\bm{\Theta}^{(t+1)}=\bm{\Theta}^{(t)}-\eta\left([\check{\bm{g}}_N^{(t)},\dots,\check{\bm{g}}_N^{(t)}]+\kappa_1\bm{\Theta}^{(t)}+\kappa_2\bm{\Theta}^{(t)}{\bm{\Omega}}^{(t)}\right),
\end{equation}
where $\eta$ is defined below (\ref{update}).

\subsection{Modified Turbo Compressed Sensing (M-Turbo-CS)}\label{M-turbo-CS}
The recovery of $\{\bm{g}_n^{(t)}\}_{n=1}^N$ from $\bm{y}^{(t)}$ in (\ref{sec:system,ssec:trans,equ:channel final}) is a compressed sensing problem, where the compression matrix $\bm{A}=[\bm{A}_1,\dots,\bm{A}_N]\in\mathbb{R}^{2s\times dN}$ composed of $N$ partial DCT matrices is partial orthogonal, i.e., $\bm{A}\bm{A}^T=\bm{I}_{2s}$. It is known that the Turbo-CS algorithm \cite{ma_turbo_2014} is the state-of-the-art compressed sensing problem to handle partial-orthogonal sensing matrices. Inspired by this, we basically follow the idea of Turbo-CS to solve the compressed sensing problem involved in (\ref{sec:system,ssec:trans,equ:channel final}). But the difference is that, as each $\bm{g}_n^{(t)}$ is the gradient for a different task $n$, $\{\bm{g}_n^{(t)}\}_{n=1}^N$ generally have different \emph{prior} distributions. With (\ref{normalization})-(\ref{sec:system,ssec:trans,equ:channel final}), we assume that the entries of $\bm{g}_n^{(t)}$ are independently drawn from a Bernoulli Gaussian distribution:
\begin{equation}\label{sec:system,ssec:recei,equ:Bernoulli Gaussian}
	g_{n,k}^{(t)} \sim\left\{\begin{array}{ll}
		0, & \text { probability }=1-\lambda_n^{(t)}, \\
		\mathcal{N}\left(0, \gamma_n^{(t)}\right), & \text { probability }=\lambda_n^{(t)},
	\end{array}\right.
\end{equation}
where $g_{n,k}^{(t)}$ is the $k$-th element of $\bm{g}_n^{(t)}$, $\lambda_n^{(t)}$ is the sparsity of $\bm{g}_n^{(t)}$, and $\gamma_n^{(t)}$ is the inter-task power allocation coefficient defined in (\ref{sec:system,ssec:trans,equ:mapping}). The sparsity $\lambda_n^{(t)}$ in the \emph{prior} distribution is estimated by the expectation-maximization algorithm \cite{vila_expectation-maximization_2013}.  With the above \emph{prior} model, we modify the Turbo-CS algorithm accordingly to accommodate concurrent model aggregations of the $N$ tasks as follows.

% Turbo-CS算法的好处和特点 	
\begin{figure}[h] 	\centering 	\includegraphics[width=1\linewidth]{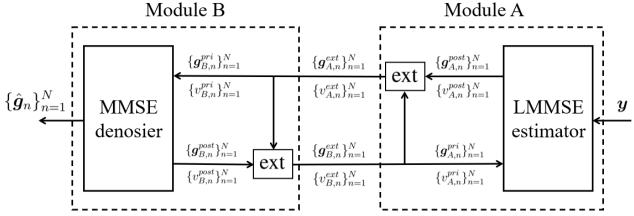} 	\caption{An illustration of the M-Turbo-CS algorithm.} 	\label{sec:system,ssec:recei,fig:Turbo-CS} \end{figure} 
As shown in Fig. \ref{sec:system,ssec:recei,fig:Turbo-CS}, M-Turbo-CS iterates between two modules where module A is a linear minimum mean-squared error (LMMSE) \cite{steven_fundamentals_1993} estimator handling the linear constraint in (\ref{sec:system,ssec:trans,equ:channel final}), and module B is a minimum mean-squared error (MMSE) \cite{steven_fundamentals_1993} denoiser exploiting the gradient sparsity in (\ref{sec:system,ssec:recei,equ:Bernoulli Gaussian}). The two modules are executed iteratively until convergence. For brevity, we henceforth drop out the round index $t$ in circumstances without causing ambiguity. The detailed operations of the iterative algorithm are provided below. 

The iterative process begins with Module A. Given the \emph{prior} mean $\bm{g}^{{pri}}_{A,n}\in\mathbb{R}^{d}$ initialized to $\bm{0}$, the \emph{prior} variance $v^{{pri}}_{A,n}\in\mathbb{R}^{d}$ initialized to $1$ and the observed vector $\bm{y}$ in (\ref{sec:system,ssec:trans,equ:channel final}) as the input of the LMMSE estimator in Module A, the \emph{posterior} mean $\bm{g}^{post}_{A,n}\in\mathbb{R}^{d}$ and the \emph{posterior} variance $v^{post}_{A,n}\in\mathbb{R}$ of the LMMSE estimator is given by
\begin{subequations}\label{sec:system,ssec:recei,equ:multi task turbo A}		
	\begin{align}
		\bm{g}_{A,n}^{post}&=\bm{g}_{A,n}^{pri}+\frac{\gamma_nv_{A,n}^{pri}(\bm{y}-\sum_{n'=1}^N\bm{A}_{n'}\bm{g}^{pri}_{A,n'})}{\sum_{n'=1}^N\gamma_{n'}v_{A,n'}^{pri}+\sigma^2}\bm{A}_n^T,\forall n\in[N], \label{sec:system,ssec:recei,equ:multi task turbo xA}\\
		v_{A,n}^{post}& = v_{A,n}^{pri} - \frac{s}{d}\frac{\gamma_n{v_{A,n}^{pri}}^2}{\sum_{n'=1}^N\gamma_{n'} v_{A,n'}^{pri}+\sigma^2},\forall n\in[N].\label{sec:system,ssec:recei,equ:multi task turbo vA}	
	\end{align}
\end{subequations}		
Then Module A outputs the extrinsic message of the LMMSE estimate as
\begin{subequations}\label{sec:system,ssec:recei,equ:multi task turbo B_PRI}	
	\begin{align}
		\bm{g}^{ext}_{A,n} &= v^{ext}_{A,n}
		\begin{pmatrix}
			\frac{\bm{g}^{post}_{A,n}}{v^{post}_{A,n}}-\frac{\bm{g}^{pri}_{A,n}}{v^{pri}_{A,n}}
		\end{pmatrix},\forall n\in[N], \label{sec:system,ssec:recei,equ:multi task turbo xB_PRI}\\
		v^{ext}_{A,n} &=
		\begin{pmatrix}
			\frac{1}{v^{post}_{A,n}}-\frac{1}{v^{pri}_{A,n}}
		\end{pmatrix}^{-1},\forall n\in[N],\label{sec:system,ssec:recei,equ:multi task turbo vB_PRI}
	\end{align}
\end{subequations}
which is used to update the \emph{prior} mean $\bm{g}^{{pri}}_{B,n}\in\mathbb{R}^{d}$ and the \emph{prior} variance $v^{{pri}}_{B,n}\in\mathbb{R}^{d}$ of the MMSE denoiser in Module B as
\begin{subequations}\label{sec:system,ssec:recei,equ:multi task turbo Aext}
	\begin{align}
		\bm{g}^{pri}_{B,n}&= \bm{g}^{ext}_{A,n},\label{sec:system,ssec:recei,equ:multi task turbo Aext mean} \\
		v^{pri}_{B,n}&=v^{ext}_{A,n}.\label{sec:system,ssec:recei,equ:multi task turbo Aext vari}
	\end{align}
\end{subequations}

For Module B, we model the \emph{prior} mean $\bm{g}_{B,n}^{pri}$ as an observation of $\bm{g}_{n}$ corrupted by additive noise $\bm{n}_{n}$ \cite{ma_turbo_2014}:
\begin{equation}\label{turbo base}
	\bm{g}_{B,n}^{pri} = \bm{g}_{n} + \bm{n}_{n}, 	
\end{equation}
where $\bm{n}_{n}\sim\mathcal{N}(0,\gamma_nv_{B,n}^{pri})$ is independent of $\bm{g}_{n}$. Then the \emph{posterior} mean $\bm{g}^{post}_{B,n}\in\mathbb{R}^{d}$ and the variance $v^{post}_{B,n}\in\mathbb{R}$ of the MMSE denoiser are given by
\begin{subequations}\label{sec:system,ssec:recei,equ:multi task turbo B}
	\begin{align}
		\bm{g}_{B,n}^{post} &= \mathbb{E}[\bm{g}_{n} \mid \bm{g}_{B,n}^{pri}],\forall n\in[N], \label{sec:system,ssec:recei,equ:multi task turbo xB} 	\\
		v_{B,n}^{post} &= \frac{1}{d\gamma_n}\sum_{k=1}^d
		\operatorname{var}[g_{n,k} \mid g_{B,n,k}^{pri}],\forall n\in[N],\label{sec:system,ssec:recei,equ:multi task turbo vB}
	\end{align}
\end{subequations}
where $\operatorname{var}[a|b]=\mathbb{E}[|a-\mathbb{E}[a|b]|^2|b]$; and $g_{n,k}$ and $g_{B,n,k}^{pri}$ are the $k$-th elements of $\bm{g}_{n}$ and $\bm{g}_{B,n}^{pri}$, respectively. Then, Module B outputs the extrinsic message of the MMSE denoiser as
\begin{subequations}\label{sec:system,ssec:recei,equ:multi task turbo A_pri}
	\begin{align}	
		\bm{g}^{ext}_{B,n} &=v^{ext}_{B,n}
		\begin{pmatrix}
			\frac{\bm{g}^{post}_{B,n}}{v^{post}_{B,n}}-\frac{\bm{g}^{pri}_{B,n}}{v^{pri}_{B,n}}
		\end{pmatrix},\forall n\in[N],\label{sec:system,ssec:recei,equ:multi task turbo xA_pri} 		\\
		v^{ext}_{B,n} &=
		\begin{pmatrix}
			\frac{1}{v^{post}_{B,n}}-\frac{1}{v^{pri}_{B,n}}
		\end{pmatrix}^{-1},\forall n\in[N]., \label{sec:system,ssec:recei,equ:multi task turbo vA_pri}	 	 	
	\end{align}
\end{subequations}
which is used to update the \emph{prior} mean $\bm{x}^{pri}_{A,n}\in\mathbb{R}^{d}$ and the variance $v^{pri}_{A,n}\in\mathbb{R}$ of the LMMSE estimator in Module A as
\begin{subequations}\label{sec:system,ssec:recei,equ:multi task turbo B ext}
	\begin{align}
		\bm{g}^{{pri}}_{B,n}&= \bm{g}^{ext}_{ A,n},\label{sec:system,ssec:recei,equ:multi task turbo B ext mean} \\
		v^{{pri}}_{B,n}&=v^{ext}_{ A,n}\label{sec:system,ssec:recei,equ:multi task turbo B ext vari}.
	\end{align}
\end{subequations}
At the end of the iterative process, the final estimate of $\bm{g}_n$ is based on the a \emph{posterior} output of the MMSE denoiser, i.e., $\hat{\bm{g}}_n=\bm{g}_{B,n}^{post}$.

To sum up, at each round $t$, given the initialization values $\bm{g}_{A,n}^{pri}=\bm{0}$ and $v_{A,n}^{pri}=1$ defined in (\ref{sec:system,ssec:recei,equ:Bernoulli Gaussian}) for $\forall n\in[N]$, (\ref{sec:system,ssec:recei,equ:multi task turbo A})-(\ref{sec:system,ssec:recei,equ:multi task turbo Aext}) and (\ref{sec:system,ssec:recei,equ:multi task turbo B})-(\ref{sec:system,ssec:recei,equ:multi task turbo B ext}) are iterated until a certain termination criterion is met, and $\bm{g}_{B,n}^{post}$ is output as $\hat{\bm{g}}_n^{(t)}$ to compute $\check{\bm{g}}_n^{(t)}$ for the model update in (\ref{sec:system,ssec:multi,equ:paramenter update_new}), for $\forall n\in[N]$. Compared with the original Turbo-CS algorithm in \cite{ma_turbo_2014}, the main difference is that each subvector $\bm{g}_n$ in (\ref{sec:system,ssec:trans,equ:channel final}) has its individual \emph{prior} distribution as in (\ref{sec:system,ssec:recei,equ:Bernoulli Gaussian}). Later, we will show by performance analysis and numerical experiments in Section \ref{turbo performance} that the M-Turbo-CS algorithm is able to efficiently suppress the inter-task interference in model aggregation.

\subsection{Overall Scheme}
The overall scheme is summarized in Algorithm \ref{sec:system,ssec:alogr,alg:algorithm}, where Lines 3-7 are executed at the devices, and Lines 8-19 are executed at the ES. Particularly, Line 15 of Algorithm \ref{sec:system,ssec:alogr,alg:algorithm} involves the optimization of this scheme, which will be elaborated later in Section \ref{optimization}. In what follows, we present the performance analysis of the proposed scheme.

%The contributions of our proposed algorithm are summarized as follows.
%\begin{itemize}
%	\item We novelly introduce over-the-air computation into the communication design of FMTL, and propose an OA-FMTL framework, where multiple learning tasks deployed on edge devices share a non-orthogonal fading channel under the coordination of an ES.
%	\item We propose an effective transmission scheme based on model sparsification and turbo compressed sensing, so as to overcome the inter-task interference, thereby achieving significant reduction in the total number of channel uses with only slight learning performance degradation.
%  	\item We proposed a M-Turbo-CS algorithm to accommodate the model aggregations of proposed OA-FMTL. Specifically, M-Turbo-CS assumes that the prior information of model aggregation is drawn from the segmented Bernoulli Gauss rather than Bernoulli Gauss as Turbo-CS assumes. 
%	\item We propose an effective algorithm to optimize the inter-task power allocation factors $\{\gamma_n^{(t)}\}_{n=1}^N$ and $\{\zeta_{n}\}_{n=1}^N$, where the updating in line \ref{step:op} of Algorithm \ref{sec:system,ssec:alogr,alg:algorithm} is elaborated later in Section \ref{optimization}.
%\end{itemize}

\begin{algorithm}[h]
	\caption{OA-FMTL with M-Turbo-CS.}\label{sec:system,ssec:alogr,alg:algorithm}  		
	\begin{algorithmic}[1] %这个1 表示每一行都显示数字 			
		\STATE \textbf{Initialize} $\bm{\Delta}_{nm}^{(1)}=\bm{0},\bm{\Omega}^{(1)}=\bm{I}_N, \gamma_n=\frac{1}{N}, \zeta_n=0, \forall n\in[N],m\in[M]$
		\FOR{$t = 1,2,\dots$}
		\STATE \textbf{Each device $m$ does in parallel:}	
		
		\STATE Compute $\{\bm{g}_{nm}^{(t)}\}_{n=1}^N$ with $\{D_{nm}\}_{n=1}^N$ and $\bm{\theta}^{(t)}$
		\STATE Compute $\{\bm{x}_m^{(t)}\}_{n=1}^N$ via (\ref{sec:system,ssec:trans,equ:error accumulated})-(\ref{sec:system,ssec:trans,equ:sparse}) and (\ref{normalization})-(\ref{sec:system,ssec:trans,equ:power_allocation})
		\STATE Compute $\{\bm{\Delta}_{nm}^{(t+1)}\}_{n=1}^N$ via (\ref{sec:system,ssec:trans,equ:get error})
		
		\STATE Send $\bm{s}_m^{(t)}$ to the ES
		over the channel in (\ref{sec:system,ssec:trans,equ:channel})
		\STATE \textbf{ES does:}
		\STATE Receive $\bm{r}^{(t)}$ via (\ref{sec:system,ssec:trans,equ:channel}) and compute $\bm{y}^{(t)}$ via (\ref{sec:system,ssec:trans,equ:channel final})
		\STATE \textbf{Initialize} $\bm{g}_{A,n}^{pri}=\bm{0}$, $v_{A,n}^{pri}=1, \forall n\in[N]$
		\REPEAT 
		\STATE Update $\{\bm{g}_{B,n}^{post}\}_{n=1}^N$, via (\ref{sec:system,ssec:recei,equ:multi task turbo A})-(\ref{sec:system,ssec:recei,equ:multi task turbo A_pri})
		\UNTIL{convergence}
		\STATE $\hat{\bm{g}}_n^{(t)}=\bm{g}_{B,n}^{post}, \forall n\in[N]$
		\STATE Update $\{\gamma_n^{(t)},\zeta_n^{(t)}\}_{n=1}^N$ via calling Algorithm \ref{optimiation algorithm} 
		\STATE $\check{\bm{g}}_n^{(t)}=\zeta_n^{(t)}\sqrt{\sum_{m=1}^M|h_m^{(t)}\alpha_m^{(t)}|^2}\bm{s}_n\circ\hat{\bm{g}}_n^{(t)}, \forall n\in[N]$
		\STATE Update $\bm{\Theta}^{(t)}$ via (\ref{sec:system,ssec:multi,equ:paramenter update_new})
		\STATE Update $\bm{\Omega}^{(t)}$ via (\ref{update_relation})
		\STATE Broadcast $\bm{\Theta}^{(t+1)}, \{\gamma_n^{(t+1)}\}_{n=1}^N$ to all the devices
		\ENDFOR
	\end{algorithmic} 	
\end{algorithm}
\section{Performance Analysis}\label{sec:opti}
\subsection{Performance Analysis of M-Turbo-CS}\label{turbo performance}
To start with, we first describe the performance analysis of the M-Turbo-CS algorithm via state evolution. The main idea of state evolution is to characterize the behavior of the Turbo-CS algorithm by the variance transfer functions of the two modules \cite{ma_performance_2015}. Specifically, the variance transfer functions of Module A are defined by
\begin{align}\label{stf_1}
	z_{n}=\phi_{n}\left(v_1,\dots,v_N;\gamma_1,\dots,\gamma_N\right), \forall n\in[N],
\end{align}
where $v_n$ denotes the \emph{prior} variance of LMMSE estimator, i.e., $v_n\triangleq v_{A,n}^{pri}$, $z_n$ denotes the reciprocal of the \emph{prior} variance of MMSE denoiser, i.e., $z_n\triangleq \frac{1}{v_{B,n}^{pri}}$; and the function $\phi_n$ maps $\{v_n\}_{n=1}^N$ into $z_n$ with the inter-task power allocation coefficients $\{\gamma_n\}_{n=1}^N$ known by the ES. With (\ref{sec:system,ssec:recei,equ:multi task turbo vA}), (\ref{sec:system,ssec:recei,equ:multi task turbo vB_PRI}) and (\ref{sec:system,ssec:recei,equ:multi task turbo Aext vari}), the expressions of functions $\{\phi_{n}\}_{n=1}^N$ are given by
\begin{align}\label{stf_2}
	\nonumber\phi_{n}&\left(v_1,\dots,v_N;\gamma_1,\dots,\gamma_N\right) \\ =&\left(\frac{d}{s\gamma_n}\begin{pmatrix}
		\sum_{{n'}=1}^N\gamma_{n'}v_{{n'}}+\sigma^2
	\end{pmatrix}-v_{n}\right)^{-1}, \forall n\in[N].
\end{align}	
In addition, the variance transfer functions of Module B are defined by
\begin{align}\label{stf_3}
	v_{n}&=\psi_{n}\left(z_{n}\right), \forall n\in[N],
\end{align}
where the function $\psi_n$ maps $z_n$ into $v_n$ for each $n$ with exploiting the \emph{prior} information in (\ref{sec:system,ssec:recei,equ:Bernoulli Gaussian}). With (\ref{sec:system,ssec:recei,equ:multi task turbo vB}), (\ref{sec:system,ssec:recei,equ:multi task turbo vA_pri}) and (\ref{sec:system,ssec:recei,equ:multi task turbo B ext vari}), we obtain the expression of $\{\psi_{n}\}_{n=1}^N$ as
\begin{align}\label{stf_4}
	\psi_{n}\left(z_{n}\right)&=\left(\frac{1}{\mathrm{mmse}_{n}(z_n)}-z_n\right)^{-1},\forall n\in[N],
\end{align}	
where $\mathrm{m m s e}_{n}(z_n)=\frac{1}{\gamma_n} \mathbb{E}\left[\operatorname{var}\left[\bm{g}_n \mid \bm{g}_n+\bm{n}_n\right]\right]$, $\bm{n}_{n}\sim\mathcal{N}(0,\gamma_n/z_n)$ is defined in (\ref{turbo base}) and $\bm{g}_n$ is defined in (\ref{sec:system,ssec:recei,equ:Bernoulli Gaussian}). The variance transfer functions (\ref{stf_1}) and (\ref{stf_3}) iteratively characterize the variances $\{v_n\}_{n=1}^N$, with each $v_n$ converging to $v_n^{\star}$ \cite{ma_performance_2015}. The fixed point $\{v_n^{\star}\}_{n=1}^N$ tracks the normalized output MSEs of the recovery on $\{\bm{g}_{n}\}_{n=1}^N$ in the M-Turbo-CS algorithm. We will show that the analysis agrees well with simulations in Section \ref{sec:num}.

\subsection{Communication Error Analysis of OA-FMTL}
We now analyze the communication error of the OA-FMTL framework. Based on (\ref{update}) and (\ref{sec:system,ssec:multi,equ:paramenter update_new}), we define the overall model updating error at the $t$-th round as 
\begin{equation}\label{overall_error}
	\bm{E}^{(t)}=\nabla \mathcal{L}(\bm{\Theta}^{(t)})-
	[\check{\bm{g}}_1^{(t)},\dots,\check{\bm{g}}_N^{(t)} ]\in\mathbb{R}^{d\times N}.
\end{equation}
We define the $n$-th columns of $\bm{E}^{(t)}$ as $\bm{e}_n^{(t)}$, which is also the model updating error from task $n$, characterized by
\begin{subequations}\label{sec:system,ssec:conve,equ:errorn_all}
	\begin{align}\label{sec:system,ssec:conve,equ:error_n}  		
		\bm{e}_{n}^{(t)} = & \ \nabla_n \mathcal{L}_n(\bm{\theta}_{n}^{(t)})-\check{\bm{g}}_n^{(t)}\\ 		
		\nonumber =& \ \underbrace{\sum_{m=1}^{M} \bm{g}_{nm}^{(t)}-\sum_{m=1}^{M} \bm{g}_{nm}^{\operatorname{sp}(t)}}_{\text{Sparsification error}}\ \\
		\nonumber&+\underbrace{\sum_{m=1}^{M} \bm{g}_{nm}^{\operatorname{sp}(t)}-\zeta_n^{(t)}\sqrt{\sum_{m=1}^M|h_m^{(t)}\alpha_m^{(t)}|^2}\bm{g}_n^{(t)}}_{\text{Misalignment error}}\\& +\underbrace{\zeta_n^{(t)}\sqrt{\sum_{m=1}^M|h_m^{(t)}\alpha_m^{(t)}|^2}(\bm{g}_n^{(t)}-\hat{\bm{g}}_n^{(t)})}_{\text{Estimation error from M-Turbo-CS}}\\=& \ \bm{e}_{n,1}^{(t)}+\bm{e}_{n,2}^{(t)}+\bm{e}_{n,3}^{(t)}\in\mathbb{R}^{d\times 1}, \label{error123}
	\end{align} 
\end{subequations}		
where $\bm{e}_{n,1}^{(t)}\in\mathbb{R}^d$ denotes the sparsification error caused by the step in (\ref{sec:system,ssec:trans,equ:sparse}), $\bm{e}_{n,2}^{(t)}\in\mathbb{R}^d$ denotes the channel misalignment error caused by the incomplete channel reversion in (\ref{sec:system,ssec:trans,equ:power_allocation}), and $\bm{e}_{n,3}^{(t)}\in\mathbb{R}^d$ denotes the estimation error caused by the imperfect recovery of M-Turbo-CS. We analyze the bound of $\mathbb{E}[||\bm{E}^{(t)}||^2]$ as
\begin{align}\label{sec:system,ssec:conve,equ:error}		
	\mathbb{E}[||\bm{E}^{(t)}||_F^2] =\sum_{n=1}^N\mathbb{E}[||\bm{e}_n^{(t)}||^2],
\end{align} 
where $\mathbb{E}[||\bm{e}_n^{(t)}||^2]$ is bounded as
\begin{align}\label{expectation error}
	\mathbb{E}[||\bm{e}_n^{(t)}||^2] \leq  3(||\bm{e}_{n,1}^{(t)}||^2+\mathbb{E}[||\bm{e}_{n,2}^{(t)}||^2]+\mathbb{E}[||\bm{e}_{n,3}^{(t)}||^2]),
\end{align}
by using the triangle inequality and the inequality of arithmetic means. 
The analysis of $||\bm{e}_{n,1}^{(t)}||^2$ defined in (\ref{error:e1}) and $\mathbb{E}[||\bm{e}_{n,2}^{(t)}||^2]$ defined in (\ref{error:e2}) basically follows the process in \cite{amiri_machine_2020} and \cite{liu_reconfigurable_2021}, respectively, and is given in Appendix \ref{proof:Theorem 1}. Besides, from the performance analysis of M-Turbo-CS in Section \ref{turbo performance}, M-Turbo-CS converges to the fixed point $\{{v_n^\star}^{(t)}\}_{n=1}^N$ at the $t$-th round, for $\forall n\in[N]$. Thus, we have
\begin{equation}\label{turbo error}
	\mathbb{E}[||\bm{e}_{n,3}^{(t)}||^2]=\sum_{m=1}^M|h_m^{(t)}\alpha_m^{(t)}|^2{\zeta_n^{(t)}}^2\gamma_n^{(t)}{v_n^{\star}}^{(t)}.
\end{equation}
\subsection{Convergence Analysis}
To proceed, we make assumptions by following the convention in stochastic optimization \cite{friedlander_hybrid_2012}.
\begin{assumption}\label{sec:conver,asu:assumption 1}
	$\mathcal{L}_n(\cdot)$ is strongly convex with some (positive) parameter $\Omega_n$. That is, $\mathcal{L}_n(\bm{y}) \geq \mathcal{L}_n(\bm{x})+(\bm{y}-\bm{x})^{T} \nabla_n \mathcal{L}_n(\bm{x})+\frac{\Omega_n}{2} \| \bm{y}-\bm{x} \|^{2}, \forall \bm{x}, \bm{y} \in \mathbb{R}^{d},\forall n\in[N]$.
\end{assumption}  	 	\begin{assumption}\label{sec:conver,asu:assumption 2}
	The gradient $\nabla_n \mathcal{L}_n(\cdot)$ is Lipschitz continuous with some (positive) parameter $L$. That is, $\|\nabla_n \mathcal{L}_n(\bm{x})-\nabla_n \mathcal{L}_n(\bm{y})\| \leq L_n\|\bm{x}-\bm{y}\|, \forall \bm{x}, \bm{y} \in \mathbb{R}^{d},\forall n\in[N]$.
\end{assumption}  	 	\begin{assumption}\label{sec:conver,asu:assumption 3} 			
	$\mathcal{L}_n(\cdot)$ is twice-continuously differentiable, for $\forall n\in[N]$.
\end{assumption}  	 
\begin{assumption}\label{sec:conver,asu:assumption 4}
	The gradient with respect to any training sample, denoted by $\nabla_n l_n(\bm{\theta}_{n}; \cdot)$, is upper bounded at $\bm{\theta}_{n}$ as 	$$\nonumber\left\|\nabla_n l_n(\bm{\theta}_{n},\bm{u}_{nmk})\right\|^{2} \leq \beta_{n,1}+ \beta_{n,2} \left\|\nabla_n \mathcal{L}_n\left(\bm{\theta}_n\right)\right\|^{2},\forall n\in[N]$$ for some constants $\beta_{n,1} \geq 0$ and $\beta_{n,2}>0$.
\end{assumption}

Assumptions \ref{sec:conver,asu:assumption 1}-\ref{sec:conver,asu:assumption 4} lead to an upper bound on the loss function $\mathcal{L}_n(\bm{\theta}_n^{(t+1)})$ with respect to the model updating (\ref{sec:system,ssec:multi,equ:paramenter update_new}), as given in the following lemma.
\begin{lemma}\label{lemma:update} 
	Let $\mathcal{L}_n(\cdot)$ satisfy Assumptions \ref{sec:conver,asu:assumption 1}-\ref{sec:conver,asu:assumption 4}. At the $t$-th training round, with $L_n =1/\eta$for $\forall n$, we have
	\begin{align} \label{equ:lemma}
		\nonumber\mathbb{E}[\mathcal{L}_n(\bm{\theta}_n^{(t+1)})] \leq&\   	\mathbb{E}[\mathcal{L}_n(\bm{\theta}_n^{(t)})]-\frac{1}{2L_n}	\mathbb{E}[\|\nabla_n \mathcal{L}_n(\bm{\theta}_n^{(t)})\|^{2}]\\
		&+\frac{1}{2L_n}	\mathbb{E}[ \|\bm{e}_n^{(t)}\|^{2}],
	\end{align}
	where the Lipschitz constant $L_n$ is defined in Assumption \ref{sec:conver,asu:assumption 2}.
\end{lemma}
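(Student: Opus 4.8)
The plan is to obtain a one-step descent inequality for the per-task empirical loss $\mathcal{L}_n$ by combining its smoothness with the inexact-gradient nature of the update, and then to read off the claimed bound by completing the square at the prescribed step size $\eta=1/L_n$. First I would invoke Assumptions \ref{sec:conver,asu:assumption 2} and \ref{sec:conver,asu:assumption 3}: since $\mathcal{L}_n$ is twice continuously differentiable with an $L_n$-Lipschitz gradient, integrating $\nabla_n\mathcal{L}_n$ along the segment $[\bm{x},\bm{y}]$ yields the standard quadratic upper bound (descent lemma)
\begin{equation}
\mathcal{L}_n(\bm{y}) \le \mathcal{L}_n(\bm{x}) + \nabla_n\mathcal{L}_n(\bm{x})^{T}(\bm{y}-\bm{x}) + \tfrac{L_n}{2}\|\bm{y}-\bm{x}\|^{2},\qquad \forall \bm{x},\bm{y}\in\mathbb{R}^{d}.
\end{equation}

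Next I would identify the increment. From the $n$-th column of the model update (\ref{sec:system,ssec:multi,equ:paramenter update_new}) and the definition of the model updating error in (\ref{sec:system,ssec:conve,equ:error_n}), the effective search direction for $\mathcal{L}_n$ is the reconstructed aggregate $\check{\bm{g}}_n^{(t)}$, which satisfies $\check{\bm{g}}_n^{(t)}=\nabla_n\mathcal{L}_n(\bm{\theta}_n^{(t)})-\bm{e}_n^{(t)}$. Substituting $\bm{x}=\bm{\theta}_n^{(t)}$, $\bm{y}=\bm{\theta}_n^{(t+1)}$ and $\bm{\theta}_n^{(t+1)}-\bm{\theta}_n^{(t)}=-\eta\check{\bm{g}}_n^{(t)}$ into the descent lemma produces an expression that is quadratic in $\check{\bm{g}}_n^{(t)}$, hence in $\nabla_n\mathcal{L}_n(\bm{\theta}_n^{(t)})$ and $\bm{e}_n^{(t)}$.

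The decisive step is the algebraic simplification at $\eta=1/L_n$. Writing $\bm{g}\equiv\nabla_n\mathcal{L}_n(\bm{\theta}_n^{(t)})$ and $\bm{e}\equiv\bm{e}_n^{(t)}$, the right-hand side reduces to $\mathcal{L}_n(\bm{\theta}_n^{(t)})-\tfrac{1}{L_n}\bm{g}^{T}(\bm{g}-\bm{e})+\tfrac{1}{2L_n}\|\bm{g}-\bm{e}\|^{2}$; expanding the inner product and the squared norm, the cross terms $\pm\tfrac{1}{L_n}\bm{g}^{T}\bm{e}$ cancel and the quadratic terms combine to leave exactly $-\tfrac{1}{2L_n}\|\bm{g}\|^{2}+\tfrac{1}{2L_n}\|\bm{e}\|^{2}$. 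Taking the expectation of the resulting pointwise inequality over the randomness in the sign vectors, the partial-DCT selection, and the channel noise (which enter through $\bm{e}_n^{(t)}$ and, recursively, through $\bm{\theta}_n^{(t)}$) yields (\ref{equ:lemma}); expectation commutes with the inequality because the bound holds deterministically realization by realization.

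The main obstacle I anticipate is the bookkeeping around the regularization terms $\kappa_1\bm{\Theta}^{(t)}+\kappa_2\bm{\Theta}^{(t)}{\bm{\Omega}^{(t)}}^{-1}$ in (\ref{sec:system,ssec:multi,equ:paramenter update_new}): the clean cancellation above requires the effective update for $\mathcal{L}_n$ to be the pure inexact-gradient step $\bm{\theta}_n^{(t+1)}-\bm{\theta}_n^{(t)}=-\eta\check{\bm{g}}_n^{(t)}$, so the argument must treat $\check{\bm{g}}_n^{(t)}$ as the search direction for the empirical loss and handle the regularization contribution separately (or work in the regime where it is subsumed into the objective analyzed). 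Apart from this, the only things to verify are that Assumption \ref{sec:conver,asu:assumption 2} genuinely licenses the quadratic upper bound and that the cross terms vanish precisely at $\eta=1/L_n$. I note that Assumptions \ref{sec:conver,asu:assumption 1} and \ref{sec:conver,asu:assumption 4} are not needed for this descent step itself; they are reserved for bounding $\mathbb{E}[\|\bm{e}_n^{(t)}\|^{2}]$ through (\ref{expectation error})--(\ref{turbo error}) and for the global convergence argument built on top of this lemma.
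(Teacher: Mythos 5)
Your proposal is correct and coincides with the paper's approach: the paper proves this lemma simply by citing \cite[Lemma 2.1]{friedlander_hybrid_2012}, and your descent-lemma computation---the quadratic upper bound from $L_n$-smoothness followed by the exact cancellation of the cross terms at $\eta=1/L_n$---is precisely the proof of that cited lemma. Your closing caveat about the regularization terms $\kappa_1\bm{\Theta}^{(t)}+\kappa_2\bm{\Theta}^{(t)}{\bm{\Omega}^{(t)}}^{-1}$ is also apt: the paper's citation glosses over the same point, since the cited lemma assumes a pure inexact-gradient step $\bm{\theta}_n^{(t+1)}-\bm{\theta}_n^{(t)}=-\eta\check{\bm{g}}_n^{(t)}$.
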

\begin{proof}
	See \cite[Lemma 2.1]{friedlander_hybrid_2012}.
\end{proof}

We are now ready to derive an upper bound of the expected difference between the training loss and the optimal loss at round $t+1$, i.e., $\mathbb{E}[\mathcal{L}(\bm{\Theta}^{(t+1)})-\mathcal{L}(\bm{\Theta}^{(t)})]$. 
\begin{thm}\label{thm:1}
	With Assumptions \ref{sec:conver,asu:assumption 1}-\ref{sec:conver,asu:assumption 4},
	\begin{align}\label{equ:final_convergence}
		\nonumber\mathbb{E}[\mathcal{L}(\bm{\Theta}^{(t+1)})-\mathcal{L}(\bm{\Theta}^{(t)})]\leq&
		\mathbb{E}[\mathcal{L}(\bm{\Theta}^{(1)})-\mathcal{L}(\bm{\Theta}^{(\star)})](\max_n\Upsilon_n)^t\\
		&+\sum_{t'=1}^t(\max_n\Upsilon_n)^{t-t'}\sum_{n=1}^NC_n^{(t')},
	\end{align}
	where $(\cdot)^{t}$ denotes the $t$-th power, $\mathcal{L}(\cdot)$ is the total empirical loss function defined in (\ref{sec:system,ssec:multi,equ:optimum function}), $\bm{\Theta}^{(1)}$ is the initial system model parameter, and the functions $\Upsilon_n$, $C_n^{(t)}$ for each task $n$ are defined as
	\begin{subequations}\label{define:psi,upsilon_C}
		\begin{align}
			\Upsilon_n\triangleq1-\frac{\Omega_n}{L_n}+\frac{3\Omega_n\beta_{n,2}}{L_n}\left(\frac{2r_n}{1-r_n}\right)^2,\\
			C_n^{(t)}\triangleq\frac{3}{2L_n}\left(\beta_{n,1}\left(\frac{2r_n}{1-r_n}\right)^2+\Psi_n^{(t)}\right),
		\end{align}
	\end{subequations}
	where $r_n=\sqrt{(d_n-k_n)/d_n}<1$ with $k_n$ defined above (\ref{sec:system,ssec:trans,equ:sparse}), and the function $\Psi_n^{(t)}$ for each task $n$ is defined by
	\begin{align}\label{define:psi}
		\nonumber\Psi_n^{(t)}\triangleq&\sum_{m=1}^M\left(v_{nm}^{(t)}-\zeta_n^{(t)}\sqrt{\gamma_n^{(t)}}|h_m^{(t)}\alpha_{m}^{(t)}|\right)^2\\&+\sum_{m=1}^M|h_m^{(t)}\alpha_m^{(t)}|^2{\zeta_n^{(t)}}^2\gamma_n^{(t)}{v_n^{\star}}^{(t)},
	\end{align}
	where $h_m^{(t)}$ is the channel gain from device $m$ defined in (\ref{sec:system,ssec:trans,equ:channel}), $v_{nm}^{(t)}$ is defined in (\ref{normalization}), $\gamma_n^{(t)}$ is the inter-task power allocation coefficient of task $n$ defined in (\ref{sec:system,ssec:trans,equ:mapping}), $\alpha_m^{(t)}$ is the power allocation coefficient defined in (\ref{sec:system,ssec:trans,equ:power_allocation}), and $\zeta_n^{(t)}$ is a normalization scaling factor of task $n$ defined in (\ref{es con model updating}).
\end{thm}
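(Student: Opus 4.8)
The plan is to reduce the joint statement to a single scalar recursion per task, then unroll that recursion geometrically and sum over the $N$ tasks using $\mathcal{L}=\sum_{n=1}^N\mathcal{L}_n$. Concretely, I would aim to establish, for each $n$, a one-step contraction of the form $\mathbb{E}[\mathcal{L}_n(\bm{\theta}_n^{(t+1)})-\mathcal{L}_n(\bm{\theta}_n^{(\star)})]\leq \Upsilon_n\,\mathbb{E}[\mathcal{L}_n(\bm{\theta}_n^{(t)})-\mathcal{L}_n(\bm{\theta}_n^{(\star)})]+C_n^{(t)}$, with $\Upsilon_n$ and $C_n^{(t)}$ as in (\ref{define:psi,upsilon_C}). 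The starting point is the per-task descent inequality of Lemma \ref{lemma:update}, which already isolates the gain term $-\frac{1}{2L_n}\mathbb{E}[\|\nabla_n\mathcal{L}_n(\bm{\theta}_n^{(t)})\|^2]$ and the communication-error penalty $\frac{1}{2L_n}\mathbb{E}[\|\bm{e}_n^{(t)}\|^2]$; everything else is a matter of bounding the penalty and converting the gain into a contraction.

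First I would control $\mathbb{E}[\|\bm{e}_n^{(t)}\|^2]$ through the three-way split (\ref{expectation error}). Using the appendix analysis (following \cite{amiri_machine_2020}) for top-$k_n$ sparsification with the error-feedback recursion (\ref{sec:system,ssec:trans,equ:error accumulated})--(\ref{sec:system,ssec:trans,equ:get error}), the sparsification term obeys a bound of the form $\|\bm{e}_{n,1}^{(t)}\|^2\leq (2r_n/(1-r_n))^2$ times a (sample-)gradient magnitude; Assumption \ref{sec:conver,asu:assumption 4} then converts this magnitude into $(2r_n/(1-r_n))^2\big(\beta_{n,1}+\beta_{n,2}\|\nabla_n\mathcal{L}_n(\bm{\theta}_n^{(t)})\|^2\big)$. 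The misalignment term $\mathbb{E}[\|\bm{e}_{n,2}^{(t)}\|^2]$ evaluates to $\sum_{m}(v_{nm}^{(t)}-\zeta_n^{(t)}\sqrt{\gamma_n^{(t)}}|h_m^{(t)}\alpha_m^{(t)}|)^2$, while the M-Turbo-CS estimation term $\mathbb{E}[\|\bm{e}_{n,3}^{(t)}\|^2]$ is already in closed form in (\ref{turbo error}); together these two are exactly $\Psi_n^{(t)}$ of (\ref{define:psi}). Substituting back into Lemma \ref{lemma:update} and grouping the $\mathbb{E}[\|\nabla_n\mathcal{L}_n\|^2]$ contributions leaves the coefficient $-\frac{1}{2L_n}\big(1-3\beta_{n,2}(2r_n/(1-r_n))^2\big)$ on the gradient norm, together with the constant $C_n^{(t)}=\frac{3}{2L_n}\big(\beta_{n,1}(2r_n/(1-r_n))^2+\Psi_n^{(t)}\big)$.

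The crux is then to invoke strong convexity (Assumption \ref{sec:conver,asu:assumption 1}): minimizing its quadratic lower bound yields the Polyak--\L{}ojasiewicz inequality $\|\nabla_n\mathcal{L}_n(\bm{x})\|^2\geq 2\Omega_n(\mathcal{L}_n(\bm{x})-\mathcal{L}_n(\bm{\theta}_n^{(\star)}))$. Because the grouped coefficient of $\mathbb{E}[\|\nabla_n\mathcal{L}_n\|^2]$ is negative precisely in the regime $\Upsilon_n<1$, substituting this lower bound only loosens the inequality and turns the gain term into a contraction on the suboptimality gap with factor exactly $\Upsilon_n=1-\frac{\Omega_n}{L_n}+\frac{3\Omega_n\beta_{n,2}}{L_n}(2r_n/(1-r_n))^2$. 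Writing $a_t\triangleq\mathbb{E}[\mathcal{L}_n(\bm{\theta}_n^{(t)})-\mathcal{L}_n(\bm{\theta}_n^{(\star)})]$, the recursion $a_{t+1}\leq\Upsilon_n a_t+C_n^{(t)}$ unrolls to $a_{t+1}\leq\Upsilon_n^{t}a_1+\sum_{t'=1}^{t}\Upsilon_n^{t-t'}C_n^{(t')}$; summing over $n$ and bounding each $\Upsilon_n\leq\max_n\Upsilon_n$ then gives (\ref{equ:final_convergence}).

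The hard part will be the sparsification-error bound $\|\bm{e}_{n,1}^{(t)}\|^2\leq(2r_n/(1-r_n))^2(\cdots)$ under error accumulation: the residual $\bm{\Delta}_{nm}^{(t)}$ couples consecutive rounds, so a clean per-round estimate requires either a worst-case contraction property of the top-$k_n$ selection or an inductive control of the accumulated residual, which is where I would lean on the appendix argument adapted from \cite{amiri_machine_2020}. A secondary care-point is verifying that the coefficient $1-3\beta_{n,2}(2r_n/(1-r_n))^2$ remains positive, so that the Polyak--\L{}ojasiewicz substitution relaxes rather than reverses the bound; this is equivalent to $\Upsilon_n<1$ and is the implicit convergence condition under which (\ref{equ:final_convergence}) is meaningful.
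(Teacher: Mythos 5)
Your proposal follows essentially the same route as the paper's proof: the three-way error split of (\ref{expectation error}) with the sparsification bound from \cite{amiri_machine_2020}, the closed-form misalignment and M-Turbo-CS terms combining into $\Psi_n^{(t)}$, substitution into Lemma \ref{lemma:update}, the strong-convexity (Polyak--\L{}ojasiewicz) lower bound to produce the contraction factor $\Upsilon_n$, and geometric unrolling (you unroll per task and then sum over $n$, the paper sums first and then unrolls, which is immaterial). Your explicit observation that the PL substitution requires $1-3\beta_{n,2}\left(\frac{2r_n}{1-r_n}\right)^2>0$ (equivalently $\Upsilon_n<1$) is a condition the paper's proof uses implicitly without stating, so your write-up is, if anything, slightly more careful on that point.
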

\begin{proof}
	See Appendix \ref{proof:Theorem 1}.
\end{proof}

From Theorem \ref{thm:1}, we see that $\mathbb{E}[\mathcal{L}(\bm{\Theta}^{(t)})-\mathcal{L}(\bm{\Theta}^{(\star)})]$, the expected difference between the training loss and the optimal loss at the $t$-th round, is upper bounded by the right-hand side of the inequality in (\ref{equ:final_convergence}). Moreover, this upper bound converges with speed $\max_n\Upsilon_n$ when $\max_n\Upsilon_n<1$. Empirically, we find that the proposed OA-FMTL scheme always converges with appropriately chosen system parameters. 
The following corollary further characterizes the convergence behavior of $\mathbb{E}[\mathcal{L}(\bm{\Theta}^{(t+1)})-\mathcal{L}(\bm{\Theta}^{(\star)})]$.

\begin{coro}\label{coro1}
	Suppose that Assumptions \ref{sec:conver,asu:assumption 1}-\ref{sec:conver,asu:assumption 4} hold, and that $\max_n\Upsilon_n < 1$ for $\forall n$. Then, as $t\rightarrow\infty$, we have 
	\begin{equation}\label{final_convegence_gap}
		\lim_{t \rightarrow \infty}\mathbb{E}[\mathcal{L}(\bm{\Theta}^{(t+1)})-\mathcal{L}(\bm{\Theta}^{(\star)})]\leq \sum_{t'=1}^{t}\sum_{n=1}^NC_n^{(t')}
	\end{equation}
\end{coro}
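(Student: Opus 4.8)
The plan is to obtain the corollary as a direct consequence of the bound (\ref{equ:final_convergence}) in Theorem~\ref{thm:1} by passing to the limit $t\to\infty$ and handling the two terms on its right-hand side separately. Throughout, write $\rho\triangleq\max_n\Upsilon_n$, so that the hypothesis reads $0\le\rho<1$.

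First I would treat the transient term $\mathbb{E}[\mathcal{L}(\bm{\Theta}^{(1)})-\mathcal{L}(\bm{\Theta}^{(\star)})](\max_n\Upsilon_n)^t=\mathbb{E}[\mathcal{L}(\bm{\Theta}^{(1)})-\mathcal{L}(\bm{\Theta}^{(\star)})]\rho^t$. Because the initial optimality gap is a fixed finite quantity and $\rho<1$, this term decays geometrically, so that $\lim_{t\to\infty}\mathbb{E}[\mathcal{L}(\bm{\Theta}^{(1)})-\mathcal{L}(\bm{\Theta}^{(\star)})]\rho^t=0$. This removes the dependence on the (possibly large) initialization gap in the asymptotic regime.

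Next I would bound the cumulative-error term $\sum_{t'=1}^t(\max_n\Upsilon_n)^{t-t'}\sum_{n=1}^N C_n^{(t')}=\sum_{t'=1}^t\rho^{\,t-t'}\sum_{n=1}^N C_n^{(t')}$. Since $0\le\rho<1$ and the exponent satisfies $t-t'\ge0$ for every index appearing in the sum, each geometric weight obeys $\rho^{\,t-t'}\le1$; hence this term is dominated by $\sum_{t'=1}^t\sum_{n=1}^N C_n^{(t')}$. Combining the vanishing of the transient term with this bound and letting $t\to\infty$ yields (\ref{final_convegence_gap}).

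The argument is essentially immediate from Theorem~\ref{thm:1}, so I do not expect a substantive obstacle. The only point meriting care is the reading of the limit: the role of the contraction factor $\rho<1$ is to asymptotically annihilate the effect of the initialization gap, leaving the steady-state loss gap governed solely by the accumulated communication-error terms $\sum_{n=1}^N C_n^{(t')}$, which are in turn controlled through the power-allocation coefficients entering $\Psi_n^{(t)}$ in (\ref{define:psi}).
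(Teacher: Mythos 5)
Your proposal is correct and follows essentially the same route as the paper: the paper's proof likewise notes that $\lim_{t\to\infty}(\max_n\Upsilon_n)^t=0$ under the hypothesis and then plugs this into (\ref{equ:final_convergence}). The only difference is that you make explicit the step the paper leaves implicit, namely bounding the accumulated-error term by dropping the geometric weights via $(\max_n\Upsilon_n)^{t-t'}\leq 1$, which is a sound and welcome clarification.
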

\begin{proof}
	When $\max_n\Upsilon_n < 1$, we have $\lim_{t\rightarrow\infty}(\max_n\Upsilon_n)^t=0$. Plugging this result and the assumption into (\ref{equ:final_convergence}), we obtain (\ref{final_convegence_gap}).
\end{proof}

Corollary \ref{coro1} shows that OA-FMTL guarantees to converge with a sufficiently small $\max_n\Upsilon_n$. However, there generally exists a gap between the converged loss $\lim_{t \rightarrow \infty}\mathbb{E}[\mathcal{L}(\bm{\Theta}^{(t+1)})]$ and the optimal one $\mathbb{E}[\mathcal{L}(\bm{\Theta}^{(\star)})]$ because of the communication noise and the inter-task interference. Therefore, we next aim to minimize the gap, so as to optimize our proposed framework.
\section{Optimization}\label{optimization}
\subsection{Problem Formulation}\label{Problem Formulation}
From Theorem \ref{thm:1} and Corollary \ref{coro1}, we see that the term $\sum_{t^\prime=1}^{t}\sum_{n=1}^NC_n^{(t^\prime)}$ in (\ref{final_convegence_gap}) represents the impact of the misalignment error and the estimation error caused by M-Turbo-CS on the convergence rate and the asymptotic learning performance. Specifically, a smaller $\sum_{t'=1}^{t}\sum_{n=1}^NC_n^{(t)}$ leads to faster convergence and a smaller gap in $\lim_{t \rightarrow \infty}\mathbb{E}[\mathcal{L}(\bm{\Theta}^{(t+1)})-\mathcal{L}(\bm{\Theta}^{(\star)})]$. This motivates us to use $\sum_{t'=1}^{t}\sum_{n=1}^NC_n^{(t')}$ as the metric of the FL performance, i.e., to minimize $\sum_{n=1}^NC_n^{(t)}$ over $\{\{\gamma_n^{(t)}\}_{n=1}^N, \{\zeta_n^{(t)}\}_{n=1}^N\}$ at each round $t$. We drop out the round index $t$ in the following for brevity, and formulate the design problem as
\begin{subequations}\label{op1}
	\begin{align}
		\mathcal{P}_{1}: \min _{\{\zeta_{n},\gamma_{n}\}_{n=1}^{N}} & \sum_{n=1}^{N}\sum_{m=1}^{M}\left(v_{nm}-\zeta_n\sqrt{\gamma_n}|h_m\alpha_{m}|\right)^2\label{p1_objec}\\&\nonumber+\sum_{n=1}^{N}{\sum_{m=1}^M|h_m\alpha_m|^2{\zeta_n^2}\gamma_nv_n^{\star}(\gamma_1,\dots,\gamma_N)}, \\
		\text { s.t. } \quad & \gamma_{n} > 0, \quad \forall n \in[N],\label{p1_con1} \\
		& \sum_{n=1}^{N} \gamma_{n}\leq1,\label{p1_inter_task}
	\end{align}
\end{subequations}
where each $v_n^{\star}$, which is in general a function of $\{\gamma_n\}_{n=1}^N$, is the fixed point of the state evolution in (\ref{stf_1}) and (\ref{stf_3}).
In problem $\mathcal{P}_{1}$, the objective function (\ref{p1_objec}) is equivalent to $\sum_{n=1}^NC_n$ in (\ref{define:psi,upsilon_C}) by dropping the term irrelevant to $\{\{\gamma_n\}_{n=1}^N, \{\zeta_n\}_{n=1}^N\}$; the constraint (\ref{p1_inter_task}) is obtained by substituting (\ref{normalization})-(\ref{sec:system,ssec:trans,equ:power_allocation}) into the power constraint (\ref{sec:system,ssec:trans,equ:power constraint}).

It is difficult to solve Problem $\mathcal{P}_{1}$ directly since the fixed point $\{v_n^{\star}\}_{n=1}^N$ of the state evolution in (\ref{stf_1}) and (\ref{stf_3}) does not have a closed-form expression. To simplify the problem, we introduce a constraint $v_n^\star(\gamma_1,\dots,\gamma_N)\leq v_n^{\text{target}}$, $\forall n$, with $v_n^{\text{target}}$ being a slack variable satisfying $0\leq v_n^{\text{target}}\leq 1$, yielding a modified problem as
\begin{subequations}\label{opti2}
	\begin{align}
		\nonumber\mathcal{P}_{2}: \min _{\{\zeta_{n},\gamma_{n},v_{n}^{\text{target}}\}_{n=1}^{N}} & \sum_{n=1}^{N}\sum_{m=1}^{M}\left(v_{nm}-\zeta_n\sqrt{\gamma_n}|h_m\alpha_{m}|\right)^2\\&+\sum_{n=1}^{N}{\sum_{m=1}^M|h_m\alpha_m|^2{\zeta_n^2}\gamma_nv_n^{\text{target}}}, \\
		\text { s.t. } \quad & \gamma_{n} > 0, \quad \forall n \in[N], \\
		& \sum_{n=1}^{N} \gamma_{n}\leq1,\\
		& v_n^\star(\gamma_1,\dots,\gamma_N)\leq v_n^{\text{target}},  \forall n \in[N],\label{p2_vstar}\\
		&0\leq v_n^{\text{target}}\leq 1.
	\end{align}
\end{subequations}
By inspection, it is not difficult to see that $\mathcal{P}_{2}$ has the same solution as $\mathcal{P}_{1}$. However, the fixed point $\{v_n^{\star}\}_{n=1}^N$ in the constraint (\ref{p2_vstar}) still does not have a closed-form expression. We next give an approximate but explicit expression of the constraint (\ref{p2_vstar}). To this end, we first define a simplified state transfer function as 
\begin{equation}\label{st_function_1}
	z_{n}=\hat{\phi}_{n}\left(v_n;\gamma_1,\dots,\gamma_N\right), \forall n\in[N],
\end{equation}
where the function $\hat{\phi}_{n}$, only mapping $v_n$ into $z_n$, is a simplified version of $\phi_{n}$ defined in (\ref{stf_1}). Assume that $c_1v_{1}=\cdots=c_Nv_{n}$ in (\ref{stf_1}) with predetermined constants $\{c_n|c_n\in\mathbb{R}^+\}_{n=1}^N$. The expression of the function $\hat{\phi}_{n}$ is given by
\begin{align}\label{phi_n hat}
	\nonumber\hat{\phi}_{n}&\left(v_n;\gamma_1,\dots,\gamma_N\right)\\ =&\left(\left(\frac{dc_n}{ s\gamma_{n}}\sum_{{n'}=1}^N\frac{\gamma_{n'}}{c_{n'}}-1\right) v_{n}+\frac{d \sigma^{2}}{s\gamma_{n} }\right)^{-1},\forall n\in[N].
\end{align}
The recursion in (\ref{stf_4}) and (\ref{phi_n hat}) continues and converges to the fixed point $v_n^\star$ for $\forall n\in[N]$, satisfying
\begin{subequations}
	\begin{align}\label{transfer}
		\hat{\phi}_n(v_n^\star;\gamma_1,\dots,\gamma_N)&=\psi_n^{-1}(v_n^\star),\\ \hat{\phi}_n(v_n;\gamma_1,\dots,\gamma_N)&\geq\psi_n^{-1}(v_n),\forall v_n\in(v_n^\star,1],
	\end{align}
\end{subequations}
where $\psi_n^{-1}(\cdot)$ is the inverse of $\psi_n(\cdot)$ which exists since $\psi_n(\cdot)$ is continuous and monotonic \cite{ma_performance_2015}. Note that $v_n \leq 1$ since the signal power is normalized. The above recursive process is illustrated by Fig. \ref{fig:evolution}. With (\ref{transfer}), the inequality in (\ref{p2_vstar}) holds when $\hat{\phi}_{n}\left(v_n;\gamma_1,\dots,\gamma_N\right)\geq \psi_{n}^{-1}\left(v_{n}\right)$, where $v_n\in(v_n^{\text{target}},1]$ for $\forall n\in[N]$, yielding
\begin{subequations}\label{opti3}
	\begin{align}
		\nonumber\mathcal{P}_{3}: \min _{\{\zeta_{n},\gamma_{n},v_{n}^{\text{target}}\}_{n=1}^{N}} & \sum_{n=1}^{N}\sum_{m=1}^{M}\left(v_{nm}-\zeta_n\sqrt{\gamma_n}|h_m\alpha_{m}|\right)^2\\&+\sum_{n=1}^{N}{\sum_{m=1}^M|h_m\alpha_m|^2{\zeta_n^2}\gamma_nv_n^{\text{target}}}, \label{p3_obj}\\
		\text { s.t. } \quad & \gamma_{n} > 0, \quad \forall n \in[N], \\
		& \sum_{n=1}^{N} \gamma_{n}\leq1,\\
		& \nonumber\hat{\phi}_{n}\left(v_n;\gamma_1,\dots,\gamma_N\right)\geq \psi_{n}^{-1}\left(v_{n}\right),\\&\forall v_n\in(v_n^{\text{target}},1], \forall n \in[N],\label{p3_vstar}\\
		&0\leq v_n^{\text{target}}\leq 1.
	\end{align}
\end{subequations}
What remains is to solve $\mathcal{P}_3$, as detailed in the next subsection.
\begin{figure}
	\centering
	\includegraphics[width=0.85\linewidth]{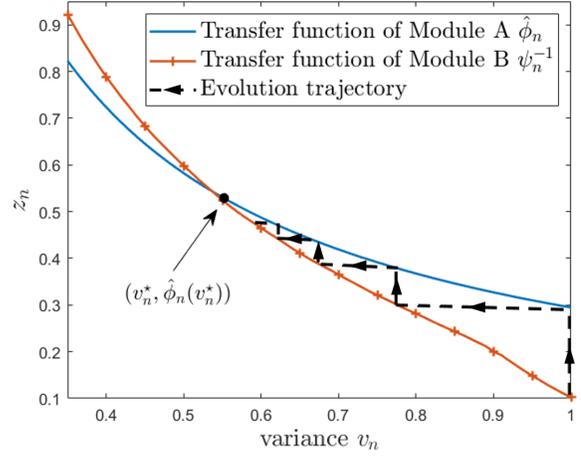}
	\caption{An illustration of the variance transfer.}
	\label{fig:evolution}
\end{figure}

\subsection{Algorithm Design}
In this subsection, we first give an equivalent form of Problem $\mathcal{P}_{3}$. We then show that the equivalent problem $\mathcal{P}_{4}$ is readily solved by using existing optimization tools. The details are provided in the following propositions.
\begin{prop}\label{prop1}
	Problem $\mathcal{P}_3$ is equivalent to
	\begin{subequations}\label{opti4}
		\begin{align}
			\nonumber\mathcal{P}_{4}: \min _{\{\gamma_{n},v_{n}^{\mathrm{target}}\}_{n=1}^{N}} & \sum_{n=1}^{N}\left(\sum_{m=1}^{M}		\left(v_{nm}-|h_m\alpha_{m}|\it{\Gamma}_{n}\right)^2\right.\\&\left.+\sum_{m=1}^{M}\left(\sqrt{v_n^{\mathrm{target}}}|h_m\alpha_{m}|\it{\Gamma}_{n}\right)^2\right),\label{opti4 obj}\\
			\text { s.t. } \quad & \gamma_{n} > 0, \quad \forall n \in[N], \\
			& \sum_{n=1}^{N} \gamma_{n}\leq1,\\
			& \nonumber\hat{\phi}_{n}\left(v_n;\gamma_1,\dots,\gamma_N\right)\geq \psi_{n}^{-1}\left(v_{n}\right),\\&\forall v_n\in(v_n^{\mathrm{target}},1], \forall n \in[N],\label{opti4_psi}\\
			&0\leq v_n^{\mathrm{target}}\leq 1,
		\end{align}
	\end{subequations}
	where ${\it{\Gamma}}_{n}\in\mathbb{R}^+$ is defined as
	\begin{equation}\label{Gamma}
		\it{\Gamma}_{n}=\sqrt{\gamma_n}\zeta_n=\frac{\sum_{{m'}=1}^Mv_{n{m'}}|h_{m'}\alpha_{m'}|}{(1+v_n^{\mathrm{target}})\sum_{m'=1}^{M}|h_{m'}\alpha_{m'}|^2}.
	\end{equation} 
	
\end{prop}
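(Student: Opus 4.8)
The plan is to eliminate the auxiliary variables $\{\zeta_n\}_{n=1}^N$ from $\mathcal{P}_3$ by partial minimization, exploiting the fact that each $\zeta_n$ enters only the objective and only through the product $\Gamma_n\triangleq\sqrt{\gamma_n}\,\zeta_n$, whose optimizer is then computed in closed form to recover (\ref{Gamma}).

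First I would inspect the objective (\ref{p3_obj}) and observe that $\zeta_n$ and $\gamma_n$ appear solely in the combinations $\zeta_n\sqrt{\gamma_n}$ (in the first sum) and $\zeta_n^2\gamma_n$ (in the second sum). Setting $\Gamma_n=\sqrt{\gamma_n}\,\zeta_n$, these become $\Gamma_n$ and $\Gamma_n^2$ respectively, so the objective rewrites verbatim as the objective (\ref{opti4 obj}) with $\Gamma_n$ kept symbolic. Crucially, $\zeta_n$ appears in \emph{none} of the constraints of $\mathcal{P}_3$: the constraint (\ref{p3_vstar}) involves $\hat{\phi}_n$, which depends on $\{\gamma_{n'}\}_{n'=1}^N$ and $v_n^{\text{target}}$ but not on $\zeta_n$, while the remaining constraints involve only $\gamma_n$ and $v_n^{\text{target}}$. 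Hence the inner minimization over $\zeta_n$ is genuinely unconstrained.

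Next I would carry out that inner minimization. Since the objective is separable across $n$ (each $\zeta_n$ sits only in the $n$-th summand) and $\zeta_n$ is free, for any fixed feasible $(\gamma_n,v_n^{\text{target}})$ with $\gamma_n>0$ the map $\zeta_n\mapsto\Gamma_n$ is a bijection, so minimizing over $\zeta_n$ is equivalent to minimizing the $n$-th summand over $\Gamma_n\in\mathbb{R}$. That summand is a strictly convex quadratic in $\Gamma_n$ with leading coefficient $(1+v_n^{\text{target}})\sum_{m}|h_m\alpha_m|^2>0$; setting its derivative to zero yields a unique stationary point that is exactly the closed-form expression (\ref{Gamma}). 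Substituting this optimal $\Gamma_n$ back produces the reduced problem $\mathcal{P}_4$, whose feasible set in $\{\gamma_n,v_n^{\text{target}}\}$ and optimal value coincide with those of $\mathcal{P}_3$; an optimal $\zeta_n$ for $\mathcal{P}_3$ is then recovered from any $\mathcal{P}_4$ optimizer via $\zeta_n=\Gamma_n/\sqrt{\gamma_n}$.

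This is at heart a partial-minimization-plus-reparametrization argument, so I do not anticipate a deep obstacle. The only points requiring care are (i) verifying that $\zeta_n$ is truly absent from every constraint of $\mathcal{P}_3$, which licenses treating the inner minimization as unconstrained, and (ii) invoking $\gamma_n>0$ to guarantee that the change of variables $\zeta_n\leftrightarrow\Gamma_n$ is well defined and that the separable per-task minima assemble into a global optimum. Once these are checked, the equality of objectives and the correspondence of optimizers follow immediately.
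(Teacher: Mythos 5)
Your proposal is correct and takes essentially the same approach as the paper: the paper likewise fixes $\{\gamma_n, v_n^{\mathrm{target}}\}_{n=1}^N$, observes that $\mathcal{P}_3$ is then an (unconstrained) quadratic convex problem in $\{\zeta_n\}_{n=1}^N$, sets the derivative of the objective to zero to obtain the closed-form optimal $\zeta_n$, and substitutes it back to arrive at the objective of $\mathcal{P}_4$. Your explicit reparametrization $\Gamma_n=\sqrt{\gamma_n}\,\zeta_n$, together with the bijection and separability checks, is merely a more carefully packaged version of that same partial-minimization argument.
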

\begin{proof}
	For fixed $\{\gamma_{n}, v_{n}^{\text{target}}\}_{n=1}^{N}$, the problem $\mathcal{P}_{3}$ for finding $\{\zeta_n\}_{n=1}^N$ is a quadratic convex problem. Thus, setting the derivative of (\ref{p3_obj}) with respect to $\zeta_n$ to zero, we obtain
	\begin{equation}\label{zeta_solve1}
		\zeta_n=\frac{\sum_{m=1}^Mv_{nm}|h_m\alpha_m|}{\sqrt{\gamma_n}\sum_{m=1}^{M}|h_m\alpha_m|^2(1+v_n^{\mathrm{target}})},\forall n\in[N].
	\end{equation}
	By plugging (\ref{zeta_solve1}) into (\ref{p3_obj}), we obtain the objective of $\mathcal{P}_4$, which completes the proof.
\end{proof}

\begin{prop}\label{prop target}
	For fixed $\{\gamma_n\}_{n=1}^N$, the objective function of Problem $\mathcal{P}_{4}$ is monotonically increasing with respect to $v_n^{\mathrm{target}}$ for each $n$.
\end{prop}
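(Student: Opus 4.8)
The plan is to use the fact that the objective (\ref{opti4 obj}) separates into a sum of per-task terms, each depending only on its own slack variable $v_n^{\text{target}}$, so it suffices to show that each summand is monotonically increasing in $v_n^{\text{target}}$ with $\{\gamma_n\}_{n=1}^N$ held fixed. For a fixed $n$ I would introduce the shorthand $a_m\triangleq|h_m\alpha_m|$, $A_n\triangleq\sum_{m=1}^M a_m^2$, and $B_n\triangleq\sum_{m=1}^M v_{nm}a_m$, so that the definition (\ref{Gamma}) reads $\Gamma_n=B_n/\big((1+v_n^{\text{target}})A_n\big)$ and the $n$-th summand of (\ref{opti4 obj}) becomes
\[
f_n(v_n^{\text{target}})=\sum_{m=1}^M\left(v_{nm}-a_m\Gamma_n\right)^2+v_n^{\text{target}}\,\Gamma_n^2 A_n .
\]

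The key step is to substitute the explicit expression for $\Gamma_n$ and exploit the cancellation that results because $\Gamma_n$ is precisely the $\zeta_n$-minimizer from Proposition \ref{prop1}. Expanding the first sum gives $\sum_m v_{nm}^2-2\Gamma_n B_n+\Gamma_n^2 A_n$, so that $f_n=\sum_m v_{nm}^2-2\Gamma_n B_n+(1+v_n^{\text{target}})\Gamma_n^2 A_n$. Inserting $\Gamma_n=B_n/\big((1+v_n^{\text{target}})A_n\big)$ makes both the cross term $\Gamma_n B_n$ and the quadratic term $(1+v_n^{\text{target}})\Gamma_n^2 A_n$ equal to $B_n^2/\big((1+v_n^{\text{target}})A_n\big)$, and combining them yields the closed form
\[
f_n(v_n^{\text{target}})=\sum_{m=1}^M v_{nm}^2-\frac{B_n^2}{(1+v_n^{\text{target}})A_n}.
\]

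Monotonicity is then immediate: since $A_n>0$ and $B_n^2\ge 0$, differentiating gives $\mathrm{d}f_n/\mathrm{d}v_n^{\text{target}}=B_n^2/\big((1+v_n^{\text{target}})^2 A_n\big)\ge 0$, so each $f_n$ is nondecreasing on $[0,1]$ (strictly increasing whenever $B_n\neq 0$); summing over $n$ establishes the claim. I do not expect a genuine obstacle here, as the argument reduces to a single elementary computation. The only subtlety worth flagging is ensuring the cancellation is performed correctly so that the entire $v_n^{\text{target}}$-dependence collapses into the single term $-B_n^2/\big((1+v_n^{\text{target}})A_n\big)$; once that reduction is confirmed—together with the fact, inherited from Proposition \ref{prop1}, that $\Gamma_n$ in (\ref{Gamma}) is the reduced optimum over $\zeta_n$ so that $f_n$ is legitimately the $\mathcal{P}_4$ summand—the monotonicity follows purely by inspection.
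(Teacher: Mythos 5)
Your proof is correct, and it takes a cleaner route than the paper's. The paper splits the $n$-th summand of (\ref{opti4 obj}) into its two terms and handles them separately: it differentiates the first term through the chain rule via $\it{\Gamma}_n$, obtaining $\frac{2v_n^{\text{target}}}{(1+v_n^{\text{target}})^3}B_n^2/A_n$ (in your notation), and then simply asserts that the second term $v_n^{\text{target}}\it{\Gamma}_n^2 A_n$ is increasing on $0\le v_n^{\text{target}}\le 1$ --- a fact which genuinely needs the upper bound $v_n^{\text{target}}\le 1$, since that term's derivative is proportional to $(1-v_n^{\text{target}})/(1+v_n^{\text{target}})^3$. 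You instead exploit the envelope structure (that $\it{\Gamma}_n$ is the $\zeta_n$-minimizer from Proposition \ref{prop1}) to collapse the summand to the closed form
\begin{equation*}
f_n(v_n^{\text{target}})=\sum_{m=1}^M v_{nm}^2-\frac{B_n^2}{(1+v_n^{\text{target}})A_n},
\end{equation*}
whose monotonicity is immediate and, notably, holds for all $v_n^{\text{target}}\ge 0$ without invoking the constraint $v_n^{\text{target}}\le 1$. The two computations are consistent: the sum of the paper's two per-term derivatives equals your $B_n^2/\bigl((1+v_n^{\text{target}})^2A_n\bigr)$. Your version buys a shorter argument, a slightly stronger conclusion (monotonicity on the whole half-line), and a more honest statement of strictness: the objective is nondecreasing in general and strictly increasing exactly when $B_n\ne 0$, whereas the paper's claim that its first derivative term is ``always positive'' is imprecise at $v_n^{\text{target}}=0$.
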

\begin{proof}
	For fixed $\{\gamma_n\}_{n=1}^N$, the derivative of the first term $\sum_{m=1}^{M}		\left(v_{nm}-|h_m\alpha_{m}|\it{\Gamma}_{n}\right)^2$ in (\ref{opti4 obj}) with respect to each $v_n^{\text{target}}$ is given by
	\begin{subequations}
		\begin{align}
			&\frac{\mathrm{d} \sum_{m=1}^{M}\left(v_{nm}-|h_m\alpha_{m}|\it{\Gamma}_{n}\right)^2}{\mathrm{d}v_n^{\text{target}}}\\
			=&\frac{\mathrm{d} \sum_{m=1}^{M}\left(v_{nm}-|h_m\alpha_{m}|\it{\Gamma}_{n}\right)^2}{\mathrm{d}\it{\Gamma}_{n}}\frac{\mathrm{d}\it{\Gamma}_{n}}{\mathrm{d}v_n^{\text{target}}}\\
			=&\frac{2v_n^{\text{target}}}{(1+v_n^{\text{target}})^3}\frac{\left(\sum_{m=1}^Mv_{nm}|h_m\alpha_m|\right)^2}{\sum_{m=1}^{M}|h_{m}\alpha_{m}|^2},
		\end{align}
	\end{subequations}
	which is always positive. Together with the fact that $\sum_{m=1}^{M}\left(\sqrt{v_n^{\text{target}}}|h_m\alpha_{m}|\it{\Gamma}_{n}\right)^2$ in (\ref{opti4 obj}) increases monotonically in $v_n^{\text{target}}$ for $0\leq v_n^{\text{target}}\leq 1$, we conclude the proof.
\end{proof}

\begin{prop}\label{prop2}
	For fixed $\{v_n^{\mathrm{target}}\}_{n=1}^N$, Problem $\mathcal{P}_{4}$ for finding $\{\gamma_n\}_{n=1}^N$ reduces to a convex feasibility test problem as
	\begin{subequations}\label{op5}
		\begin{align}
			\mathcal{P}_{5}: \text {find}\quad &\{\gamma_{n}\}_{n=1}^{N}\\
			\text { s.t. } \quad & \gamma_{n} > 0, \quad \forall n \in[N], \label{p4 gamma}\\
			& \sum_{n=1}^{N} \gamma_{n}\leq1,\\
			\nonumber&\left(v_n+\frac{1}{\psi_{n}^{-1}\left(v_{n}\right)}\right)\gamma_n-\frac{d}{s}c_n\sum_{n^\prime=1}^N\frac{\gamma_{n^\prime}}{c_{n^\prime}}\geq\frac{d}{s\sigma^2},\\&\forall v_n\in(v_n^{\mathrm{target}},1], \forall n \in[N].\label{p4 stf}
		\end{align}
	\end{subequations}
\end{prop}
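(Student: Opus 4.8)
The plan is to exploit the structure of the $\mathcal{P}_4$ objective already uncovered in Proposition~\ref{prop1}: the coefficient defined in~(\ref{Gamma}) is a function of $v_n^{\mathrm{target}}$ and of the fixed constants $\{v_{nm},|h_m\alpha_m|\}$ only, and carries \emph{no} dependence on $\{\gamma_n\}_{n=1}^N$. Hence, once $\{v_n^{\mathrm{target}}\}_{n=1}^N$ is held fixed, every term of the objective~(\ref{opti4 obj}) is a constant, so the objective is entirely independent of $\{\gamma_n\}_{n=1}^N$. Minimizing over $\{\gamma_n\}_{n=1}^N$ therefore degenerates into deciding whether the remaining constraints are satisfiable, i.e.\ into the feasibility problem $\mathcal{P}_5$. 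I would state this observation first, so that the only genuine task left is to rewrite the implicit state-evolution constraint~(\ref{opti4_psi}) in the explicit affine form~(\ref{p4 stf}).

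To perform that rewriting, I would substitute the closed-form expression~(\ref{phi_n hat}) for $\hat{\phi}_n$ into~(\ref{opti4_psi}). Both $\hat{\phi}_n(v_n;\gamma_1,\dots,\gamma_N)$ and $\psi_n^{-1}(v_n)$ are strictly positive, being reciprocals of positive variances along the state-evolution recursion; the inequality $\hat{\phi}_n\geq\psi_n^{-1}(v_n)$ is therefore equivalent, after taking reciprocals of both sides (which reverses its direction), to upper-bounding the bracketed argument of $\hat{\phi}_n^{-1}$ by $1/\psi_n^{-1}(v_n)$. Clearing the factor $1/\gamma_n$ by multiplying through by $\gamma_n>0$, and then grouping the terms proportional to $\gamma_n$ and to $\sum_{n'}\gamma_{n'}/c_{n'}$, produces an inequality that is affine in $\{\gamma_n\}_{n=1}^N$ for every fixed $v_n$, which I would identify with~(\ref{p4 stf}).

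Finally I would verify convexity. For each fixed $v_n\in(v_n^{\mathrm{target}},1]$, constraint~(\ref{p4 stf}) is a single linear inequality in $\{\gamma_n\}_{n=1}^N$, so requiring it to hold for \emph{all} $v_n$ in the interval describes an intersection of a continuum of half-spaces, which is convex; intersecting further with the convex constraints $\gamma_n>0$ and $\sum_n\gamma_n\leq1$ leaves a convex feasible region, establishing that $\mathcal{P}_5$ is a convex feasibility test. The step I expect to be the main obstacle is the algebraic reduction in the second paragraph: one must track the positivity conditions so that the reciprocal operation flips the inequality in the correct sense, and confirm that $\psi_n^{-1}(v_n)$ is well defined throughout $(v_n^{\mathrm{target}},1]$, which follows from the continuity and monotonicity of $\psi_n$ recorded below~(\ref{transfer}). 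The semi-infinite, continuum-indexed nature of the constraint is a secondary subtlety, but it bears only on the convexity argument and not on the equivalence of $\mathcal{P}_4$ and $\mathcal{P}_5$.
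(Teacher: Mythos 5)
Your proposal is correct and follows essentially the same route as the paper's own proof: for fixed $\{v_n^{\mathrm{target}}\}_{n=1}^N$ the objective (\ref{opti4 obj}) is independent of $\{\gamma_n\}_{n=1}^N$ because $\Gamma_n$ in (\ref{Gamma}) involves no $\gamma_n$, so minimization collapses to feasibility; the constraint (\ref{opti4_psi}) is rewritten through the closed form (\ref{phi_n hat}) as an inequality affine in $\{\gamma_n\}_{n=1}^N$; and convexity follows since the feasible set is an intersection of half-spaces together with the convex constraints $\gamma_n>0$ and $\sum_n\gamma_n\leq 1$. The only caveat is that carrying out your reciprocal-and-multiply algebra literally yields $\left(v_n+\frac{1}{\psi_{n}^{-1}(v_{n})}\right)\gamma_n-\frac{d}{s}c_n v_n\sum_{n'=1}^N\frac{\gamma_{n'}}{c_{n'}}\geq\frac{d\sigma^{2}}{s}$, which matches (\ref{p4 stf}) only up to what appear to be typos in the printed constraint (a missing factor $v_n$ on the sum term, and $\sigma^2$ appearing in the denominator rather than the numerator of the right-hand side), so your identification with (\ref{p4 stf}) is right in substance though not letter-for-letter.
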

\begin{proof}\label{prop zeta in obj}
	For fixed slack variables $\{v_n^{\text{target}}\}_{n=1}^N$, the objective function (\ref{opti4 obj}) is irrelevant to the optimization variables $\{\gamma_n\}_{n=1}^N$. With (\ref{phi_n hat}), the constraint (\ref{p4 stf}) is equivalent to (\ref{opti4_psi}). Thus, Problem $\mathcal{P}_4$ reduces to the feasibility test in (\ref{op5}). Since the constraints in (\ref{p4 gamma})-(\ref{p4 stf}) are convex, Problem $\mathcal{P}_5$ is a convex problem.
\end{proof}
\begin{prop}\label{prop4}
	Suppose that for given $\{v_n^{\mathrm{target}}=\tilde{v}_n\}_{n=1}^N$, Problem $\mathcal{P}_{5}$ is feasible. Then for any $\{v_n^{\mathrm{target}}\}_{n=1}^N$ satisfying $v_n^{\text{target}}\geq \tilde{v}_{n}$, for $n=1,\dots,N$, Problem $\mathcal{P}_{5}$ is feasible. On the contrary, suppose that for given $\{v_n^{\mathrm{target}}=\tilde{v}_n\}_{n=1}^N$, Problem $\mathcal{P}_{5}$ is infeasible. Then for any $\{v_n^{\mathrm{target}}\}_{n=1}^N$ satisfying $v_n^{\mathrm{target}}\leq \tilde{v}_{n}$, for $n=1,\dots,N$, Problem $\mathcal{P}_{5}$ is infeasible.
\end{prop}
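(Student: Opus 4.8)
The plan is to recognize that the slack variables $\{v_n^{\text{target}}\}_{n=1}^N$ enter Problem $\mathcal{P}_5$ only through the \emph{range} $(v_n^{\text{target}},1]$ over which the constraint (\ref{p4 stf}) must hold, and not through the functional form of that constraint. For any fixed value of $v_n$, the inequality (\ref{p4 stf}) is a linear constraint on $\{\gamma_n\}_{n=1}^N$ whose coefficients depend on $v_n$ (through $\psi_n^{-1}(v_n)$) but are completely independent of $v_n^{\text{target}}$. Thus changing $v_n^{\text{target}}$ merely adds or removes members from the (infinite) family of linear constraints indexed by $v_n\in(v_n^{\text{target}},1]$, while leaving the remaining constraints (\ref{p4 gamma}) untouched.

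First I would define the feasible set $\mathcal{F}(\{v_n^{\text{target}}\}_{n=1}^N)$ as the set of all $\{\gamma_n\}_{n=1}^N$ satisfying $\gamma_n>0$, $\sum_n\gamma_n\le1$, and (\ref{p4 stf}) for every $v_n\in(v_n^{\text{target}},1]$ and every $n$. The central observation is an interval-nesting property: if $v_n^{\text{target}}\ge\tilde v_n$ for all $n$, then $(v_n^{\text{target}},1]\subseteq(\tilde v_n,1]$, so the family of constraints imposed under $\{v_n^{\text{target}}\}$ is a \emph{subset} of the family imposed under $\{\tilde v_n\}$. Consequently any $\{\gamma_n\}\in\mathcal{F}(\{\tilde v_n\})$ automatically satisfies every constraint in the smaller family, which yields the containment $\mathcal{F}(\{\tilde v_n\})\subseteq\mathcal{F}(\{v_n^{\text{target}}\})$.

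The two assertions then follow at once. For the direct part, feasibility at $\{\tilde v_n\}$ means $\mathcal{F}(\{\tilde v_n\})\ne\emptyset$, and the containment forces $\mathcal{F}(\{v_n^{\text{target}}\})\ne\emptyset$ whenever $v_n^{\text{target}}\ge\tilde v_n$, i.e.\ $\mathcal{P}_5$ remains feasible. For the converse, infeasibility at $\{\tilde v_n\}$ means $\mathcal{F}(\{\tilde v_n\})=\emptyset$; applying the same interval-nesting with the roles reversed, for any $\{v_n^{\text{target}}\}$ with $v_n^{\text{target}}\le\tilde v_n$ we have $(\tilde v_n,1]\subseteq(v_n^{\text{target}},1]$ and hence $\mathcal{F}(\{v_n^{\text{target}}\})\subseteq\mathcal{F}(\{\tilde v_n\})=\emptyset$, so $\mathcal{P}_5$ stays infeasible. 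The second claim is really just the contrapositive viewpoint of the first, read in the direction of decreasing targets.

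The argument involves essentially no computation; the only thing to get right is the direction of the monotonicity, namely that \emph{raising} $v_n^{\text{target}}$ \emph{shrinks} the constraint family and therefore \emph{enlarges} the feasible set. The one mild subtlety I would verify explicitly is that $v_n^{\text{target}}$ appears nowhere in (\ref{p4 stf}) except as the open lower endpoint of the index interval, which is clear from the explicit form of $\hat{\phi}_n$ in (\ref{phi_n hat}) together with the constraint derivation in Proposition \ref{prop2}. Once this is confirmed, the single set-containment relation closes both halves of the proposition simultaneously.
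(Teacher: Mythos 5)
Your proof is correct and takes essentially the same route as the paper: define the feasible region of $\mathcal{P}_5$ as a function of the targets, observe that raising $v_n^{\mathrm{target}}$ shrinks the index interval $(v_n^{\mathrm{target}},1]$ in (\ref{p4 stf}) and hence removes constraints and enlarges the feasible set, and read off both halves of the proposition from the resulting set containment. One remark: your inclusion directions are the correct ones --- the paper's printed proof asserts $\mathcal{S}(v_1^{\mathrm{target}},\dots,v_N^{\mathrm{target}})\subseteq\mathcal{S}(\tilde{v}_1,\dots,\tilde{v}_N)$ when $v_n^{\mathrm{target}}\geq\tilde{v}_n$, which is reversed (an apparent typo) and, read literally, would not imply the feasibility claim, whereas your containment $\mathcal{F}(\{\tilde v_n\})\subseteq\mathcal{F}(\{v_n^{\mathrm{target}}\})$ is exactly what the argument requires.
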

\begin{proof}
	We define the feasible region of Problem $\mathcal{P}_5$ as $\mathcal{S}(v_1^{\text{target}},\dots,v_N^{\text{target}})$. Based on constraints (\ref{p4 stf}), we see that for any $\{v_n^{\mathrm{target}}\}_{n=1}^N$ satisfying $v_n^{\text{target}}\geq \tilde{v}_{n}$, $\mathcal{S}(v_1^{\text{target}},\dots,v_N^{\text{target}})\subseteq \mathcal{S}(\tilde{v}_1,\dots,\tilde{v}_N)$ holds. Similarly, for any $\{v_n^{\mathrm{target}}\}_{n=1}^N$ satisfying $v_n^{\text{target}}\leq \tilde{v}_{n}$, we obtain $\mathcal{S}(v_1^{\text{target}},\dots,v_N^{\text{target}})\supseteq \mathcal{S}(\tilde{v}_1,\dots,\tilde{v}_N)$. Therefore, Proposition \ref{prop4} holds.
	%	suppose that for given $\{v_n^{\text{target}}=\tilde{v}_n\}_{n=1}^N$, constraint (\ref{p4 stf}) is established. Then for any $\{v_n^{\mathrm{target}}\}_{n=1}^N$ satisfying $v_n^{\text{target}}\geq \tilde{v}_{n}$, for $n=1,\dots,N$, constraint (\ref{p4 stf}) is also established, which further completes the proof of the first part in this proposition. The proof of the second part is proofed similarly, and is omitted for brevity.
\end{proof}

From Propositions \ref{prop1}-\ref{prop2}, to solve Problem $\mathcal{P}_3$, it suffices to find the minimum values of $\{v_n^{\text{target}}\}_{n=1}^N$ such that Problem $\mathcal{P}_5$ is feasible. From Proposition \ref{prop4}, the minimum values of $\{v_n^{\text{target}}\}_{n=1}^N$ can be found by applying bisection search to each $v_n^{\text{target}}$. In addition, it is difficult to handle the constraint (\ref{p4 stf}) of each $v_n$ over the entire continuous region $[v_n^{\text{target}},1)$. In practice, we require that (\ref{p4 stf}) holds only on some discrete points of $v_n$ within $[v_n^{\text{target}},1)$. Then, the feasibility test $\mathcal{P}_5$ can be solved by using standard convex optimization tools. The optimization algorithm is summarized in Algorithm \ref{optimiation algorithm}.

\begin{algorithm}[h]
	\caption{Optimization of OA-FMTL with M-Turbo-CS.}\label{optimiation algorithm}  		
	\begin{algorithmic}[1] %这个1 表示每一行都显示数字 			
		\STATE \textbf{Initialize} $\gamma_n=\frac{1}{N}, \zeta_n=0, v_{n}^\mathrm{target}=1,\forall n\in[N]$
		\STATE Update $\{\zeta\}_{n=1}^N$ via (\ref{zeta_solve1})

		\FOR{$n = 1,2,\dots,N$}
		\STATE Use bisection search to find $v_{n}^\mathrm{target}$ such that Problem $\mathcal{P}_5$ is feasible
		\ENDFOR
		\STATE Update $\{\gamma_n\}_{n=1}^N$ via solving Problem $\mathcal{P}_5$
		\STATE \textbf{Output} $\{\zeta_n,\gamma_n\}_{n=1}^N$

	\end{algorithmic} 	
\end{algorithm}
\section{Numerical Results}\label{sec:num}
\subsection{Experimental Settings}
We validate our proposed OA-FMTL framework with experiments, and provide some baseline schemes for comparison:
\begin{itemize}
	\item Error-free bound: This bound assumes that the ES receives all the local gradients in an error-free fashion and updates the global model by (\ref{update}) and (\ref{update_relation}).
	
	\item OA-FL with TDM and Turbo-CS: Time division multiplexing (TDM) is applied to the tasks, i.e., each task is assigned with an orthogonal time slot to avoid inter-task interference. OA-FL \cite{amiri_machine_2020} is applied in transmission, and Turbo-CS \cite{ma_turbo_2014} is applied to recover the model aggregation at the ES.
	
	\item OA-FMTL with Turbo-CS: The proposed OA-FMTL framework is applied to transmit the model parameters of all the tasks concurrently, and the Turbo-CS algorithm is used to individually recover the model aggregation of each task by treating the inter-task interference as noise.
	
	\item OA-FMTL with M-Turbo-CS: The proposed OA-FMTL framework is applied to transmit the model parameters of all the tasks concurrently, and the M-Turbo-CS algorithm is used to recover the model aggregations of all the tasks, as outlined in Algorithm \ref{sec:system,ssec:alogr,alg:algorithm}.	
\end{itemize} 

To compare the performances of the above schemes, we design two classification experiments using two commonly used databases, namely, MNIST and Human Activity Recognition (HAR). The experimental settings are as  follows.
\begin{itemize}
	\item MNIST experiment: We consider federated image classification tasks on the MNIST dataset of handwritten digits \cite{yann_mnist_1998} and the Fashion-MNIST dataset of fashion clothing \cite{xiao_fashion-mnist_2017} (i.e., $N=2$). For each task, we train a convolutional neural network with two $5\times5$ convolution layers (the first with
	10 channels, the second with 20, each followed with $2\times2$
	max pooling), a fully connected layer with 50 units and
	ReLu activation, and a final softmax output layer (model parameter length $d=10920$). The learning rate is set to $\eta=0.1$, and the compression ratio is set to $2s/d=3/4$.	
	\item HAR experiment: The set of data gathered from accelerometers and gyroscopes of cell phones from 30 individuals performing six different activities including lying-down, standing, walking, sitting, walking-upstairs, and walking-downstairs \cite{anguita_public_2013}. We first divide 30 individuals into two groups (22 cases for model training and the other 8 for model testing); and then divide the training set into four subgroups with each subgroup assigned to an individual task classifying different human activities (i.e., $N=4$). For each task, we train a neural network consisting of one hidden layer with the output size of 100, and the ReLU activation function, and a softmax layer for network output (model parameter length $d=56806$). The learning rate is set to $\eta=0.03$, and the compression ratio is set to $2s/d=4/5$.
\end{itemize}
Other experimental settings are as follows: We consider a system of $M=20$ local devices; the channel gain $h_m^{(t)}$ follows the Rayleigh distribution with unit variance for $\forall m\in[M]$. 

\subsection{Performance Comparisons}
Fig. \ref{fig:turboperformance1} compares the MSEs of the gradient aggregations of various schemes at $t=90$ round in the MNIST experiment. We see that the simulation results agree well with their corresponding state evolutions. We also see that OA-FMTL with M-Turbo-CS performs better in suppressing the inter-task interference than OA-FMTL with Turbo-CS. This is expected since the latter treats the inter-task interference as noise. Besides, due to no inter-task interference in TDM, the OA-FL with TDM and Turbo-CS outperforms the others in terms of MSE. However, we will show that the learning performance of OA-FMTL with M-Turbo-CS is comparable to that of OA-FL with TDM and Turbo-CS, which reveals that OA-FMTL has a strong error-tolerance capability in gradient aggregation. Fig. \ref{fig:learning1} shows the performances of the tasks in terms of test accuracy versus communication round $t$. We observe that the test accuracy of OA-FMTL with M-Turbo-CS is close to that of OA-FL with TDM and Turbo-CS, and is better than that of OA-FMTL with Turbo-CS. We also note that the test accuracy of OA-FMTL with M-Turbo-CS is only about 2\%-5\% lower than that of the ideal error-free bound, which demonstrates the excellent interference suppression capability of OA-FMTL with M-Turbo-CS.
\begin{figure}[h]
	\centering
	\includegraphics[width=0.85\linewidth]{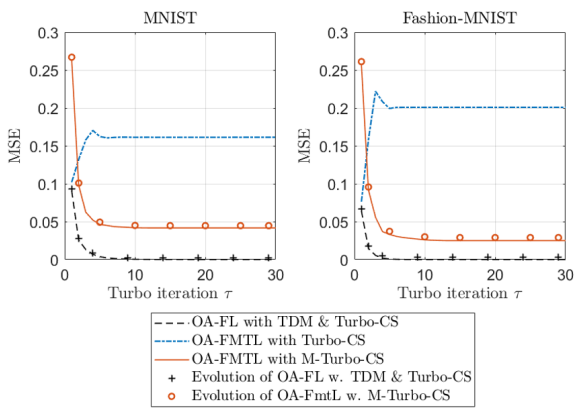}
	\caption{The output MSE on the MNIST experiment at the communication round $t=90$ with SNR $\frac{P_m}{\sigma_w^2}= \SI{20}{dB},\forall m\in[M]$.}
	\label{fig:turboperformance1}
\end{figure}
\begin{figure}[h]
	\centering
	\includegraphics[width=0.9\linewidth]{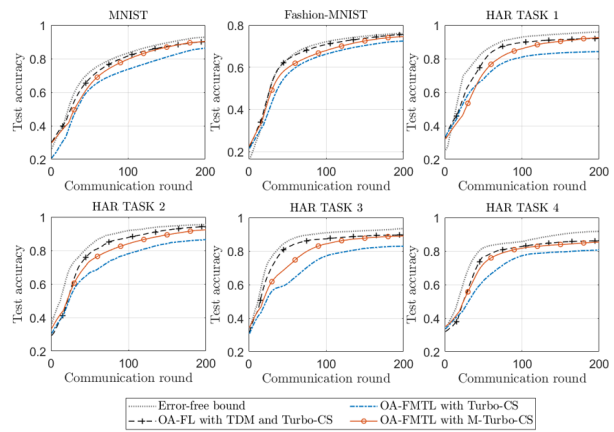}
	\caption{The test accuracies on the two experiments with SNR $\frac{P_m}{\sigma_w^2}= \SI{20}{dB},\forall m\in[M]$.}
	\label{fig:learning1}
\end{figure}

For further comparison, we define $\xi_n^{\text{max}}$ as the maximum test accuracy of each task $n$, and define $t^\star(\xi)$ as the total required rounds of communications for every task $n$ to reach its target accuracy $\xi\xi_n^{\text{max}}$, where $\xi$ is called the relative target accuracy. Thus, $t^\star(\xi)$ of the scheme including the OA-FMTL framework is given by
\begin{equation}\label{sec:system,ssec:perform,equ:MOFL perform 1}		 	 		
	t^\star(\xi) =max\{t_1^\star(\xi),\dots,t_N^\star(\xi)\},
\end{equation}
where $t_n^\star(\xi)$ is the required communication rounds of task $n$ to reach its target accuracy $\xi\xi_n^{\text{max}}$. Besides, $t^\star(\xi)$ of the scheme with TDM is given by
\begin{equation}		 	
	t^\star(\xi) =\sum_{n=0}^N t_n^\star(\xi). 
\end{equation}
Fig. \ref{fig:channel1} depicts the total required communication rounds $t^\star$ versus relative target accuracy $\xi$ on the two experiments. We see that OA-FMTL with M-Turbo-CS significantly outperforms the other two baseline schemes, and that the total required communication rounds of OA-FMTL with M-Turbo-CS to complete $N$ tasks is only $1/N$ of that of OA-FL with TDM and Turbo-CS at any value of $\xi$. In addition, we note that OA-FMTL with Turbo-CS also requires fewer communication rounds than OA-FL with TDM and Turbo-CS, which demonstrates the advantage of non-orthogonal transmission. 
\begin{figure}[h]
	\centering
	\includegraphics[width=0.85\linewidth]{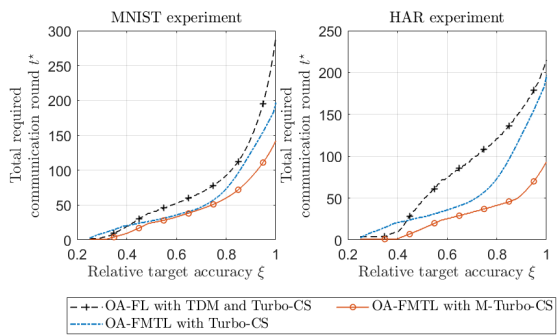}
	\caption{(a) The required communication rounds $t^\star$ of interference-free on the MNIST experiment, with $ \xi^{\text{max}}_1=0.78,\xi^{\text{max}}_2=0.69$ and SNR $\frac{P_m}{\sigma_w^2}= \SI{10}{dB},\forall m\in[M]$. (b) The required communication rounds $t^\star$ of interference-free on the HAR experiment, with $\xi^{\text{max}}_1=0.78,\xi^{\text{max}}_2=0.76,\xi^{\text{max}}_3=0.78,\xi^{\text{max}}_4=0.75$ and SNR $\frac{P_m}{\sigma_w^2}= \SI{20}{dB},\forall m\in[M]$.}
	\label{fig:channel1}
\end{figure}
%\begin{figure}[h]
%	\centering
%	\includegraphics[width=1\linewidth]{turbo_performance2}
%	\caption{The output MSE on MNIST database with $2s/d=3/4$ at the communication round $t=90$. (a) For MNIST task, $\lambda_1^{(90)}=0.5515$. (b) For Fashion-MNIST task, $\lambda_2^{(90)}=0.4728$.}
%	\label{fig:turboperformance2}
%\end{figure}

We next validate the performance of the power allocation optimization for OA-FMTL with M-Turbo-CS. We describe some inter-task power allocation approaches for comparison:
\begin{itemize}
	\item Equal power allocation: This approach assume that an equal power is allocated to each task at each communication round, i.e., we set $\gamma_1=\dots=\gamma_N$ in Problem $\mathcal{P}_1$.
	\item Random power allocation: This approach randomly generates the inter-task power allocation coefficients $\{\gamma_n\}_{n=1}^N$ satisfying $\sum_{n=1}^N\gamma_n=1$.
	\item Optimized power allocation: This approach solves the optimization problem $\mathcal{P}_3$ to obtain the inter-task power allocation coefficients $\{\gamma_n^{(t)}\}_{n=1}^N$ at each communication round $t$.
\end{itemize}
\begin{figure}[!htb]
	\centering
	\includegraphics[width=0.85\linewidth]{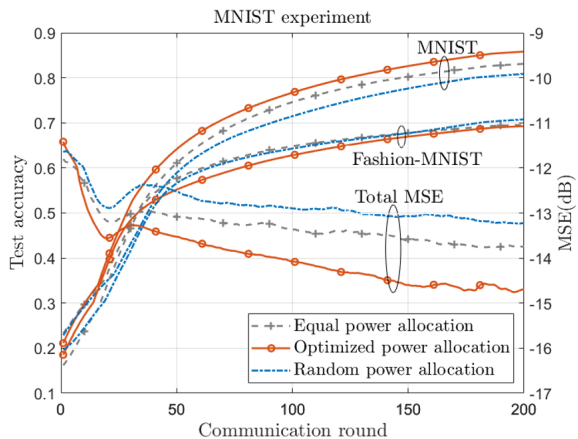}
	\caption{The test accuracies on MNIST experiment with SNR $\frac{P_m}{\sigma_w^2}= \SI{0}{dB},\forall m\in[M]$. Ascending curves: the performance of each task in terms of test accuracy versus communication round $t$. Descending curves: the toatl MSE, i.e., $\sum_{n=1}^N||\bm{e}_n||^2/d$, versus communication round $t$.}
	\label{fig:learning3}
\end{figure}

Fig. \ref{fig:learning3} shows the performance comparison of the three power allocation approaches on the MNIST experiment and the HAR experiment, where these ascending curves measure the performance of all the tasks in terms of test accuracy versus communication round $t$; and these descending curves measure the total MSE, i.e., $\sum_{n=1}^N||\bm{e}_n||^2/d$ defined in (\ref{sec:system,ssec:conve,equ:errorn_all}), versus communication round $t$. Similarly, Fig. \ref{fig:learning2} shows the performance comparison of the three power allocation approaches on the HAR experiment. We see that the accuracies of some tasks are improved, e.g., MNIST, HAR TASK 2 and HAR TASK 3 at the cost of a slight decrease in the accuracies of the other tasks. We also see that optimized power allocation reduces the total MSE as compared with the other approaches. Fig. \ref{fig:channel3} depicts the total required communication rounds $t^\star$ of the three power allocation approaches on the two experiments, versus relative target accuracy $\xi$. We see that random power allocation significantly increases the required number of communication rounds, compared to the other allocation approaches. We also see that our proposed optimized power allocation approach reduces the required communication rounds by $25$ at relative target accuracy $\xi=1$, compared to the equal power allocation approach. This illustrates again that the optimized power allocation approach improves the overall system performance.
\begin{figure}[!htb]
	\centering
	\includegraphics[width=0.85\linewidth]{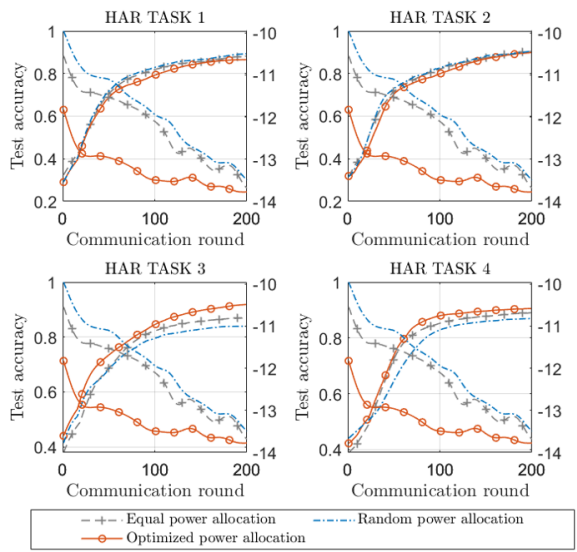}
	\caption{The test accuracies on HAR experiment with SNR $\frac{P_m}{\sigma_w^2}= \SI{0}{dB},\forall m\in[M]$. Ascending curves: the performance of each task in terms of test accuracy versus communication round $t$. Descending curves: the toatl MSE, i.e., $\sum_{n=1}^N||\bm{e}_n||^2/d$, versus communication round $t$.}
	\label{fig:learning2}
\end{figure}
\begin{figure}[!htb]
	\centering
	\includegraphics[width=0.85\linewidth]{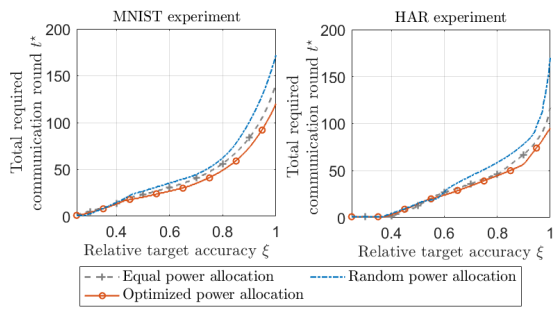}
	\caption{(a) The required communication rounds $t^\star$ of the three power allocation approaches on the MNIST experiment, with $\xi^{\text{max}}_1=0.79,\xi^{\text{max}}_2=0.65$ and the signal-to-noise ratio $\frac{P_m}{\sigma_w^2}= \SI{0}{dB}$. (b) The required communication rounds $t^\star$ of the three power allocation approaches on the HAR experiment, with $ \xi^{\text{max}}_1=0.86,\xi^{\text{max}}_2=0.86,\xi^{\text{max}}_3=0.83,\xi^{\text{max}}_4=0.83$ and the signal-to-noise ratio $\frac{P_m}{\sigma_w^2}= \SI{0}{dB}$.}
	\label{fig:channel3}
\end{figure}

%\begin{figure}
%	\centering
%	\includegraphics[width=0.7\linewidth]{learning4}
%	\caption{The test accuracies on HAR database with $2s/d=4/5$.}
%	\label{fig:learning4}
%\end{figure}

%Meanwhile, we note that our proposed optimization algorithm actually reduces the power of one task to increase the power of another task.  In Fig. \ref{fig:turboperformance2},  

%\begin{figure}[h]
%	\centering
%	\includegraphics[width=0.7\linewidth]{turbo_performance2}
%	\caption{The output MSE on MNIST database with $2s/d=3/4$ at the communication round $t=90$. (a) For MNIST task, $\lambda_1^{(90)}=0.5515$. (b) For Fashion-MNIST task, $\lambda_2^{(90)}=0.4728$.}
%	\label{fig:turboperformance2}
%\end{figure}

\section{Conclusions}
We proposed the over-the-air FMTL (OA-FMTL) framework to achieve communication-efficient FMTL in the presence of inter-task interference. Specifically, we modified the original Turbo-CS algorithm in the compressed sensing context to reconstruct the sparsified model aggregation updates at ES. We further analyzed the performance of the proposed OA-FMTL framework together with the M-Turbo-CS algorithm. Based on that, we formulated a communication-learning optimization problem to improve the system performance by adjusting the power allocation between multiple tasks on the edge devices. We showed that, under mild conditions, the problem can be solved by a feasibility test together with bisection search. Numerical simulations showed that our proposed OA-FMTL can efficiently suppress the inter-task interference, so as to achieve a learning performance comparable to its counterpart with orthogonal multi-task transmission. Meanwhile, simulations also showed that our proposed optimization algorithm further reduces the communication overhead by appropriately adjusting the power allocation among the tasks.

\appendices
\section{Proof of Theorem 1}\label{proof:Theorem 1}
To start with, we bound $||\bm{e}_{n,1}^{(t)}||^2$ as
\begin{align}\label{error:e1}
	\left\|\bm{e}_{n,1}^{(t)}\right\|^2\leq&\left (\frac{2r_n-r_n^{t}-r_n^{t+1}}{1-r_n}\right)^2\\
	\nonumber&\times\left(\beta_{n,1}+\beta_{n,2}\left\|\nabla_n \mathcal{L}_n\left(\bm{\theta}_n^{(t)}\right)\right\|^{2}\right)\\
	\nonumber\leq&\left (\frac{2r_n}{1-r_n}\right)^2\left(\beta_{n,1}+\beta_{n,2}\left\|\nabla_n \mathcal{L}_n\left(\bm{\theta}_n^{(t)}\right)\right\|^{2}\right),
\end{align}
where the first inequality is from \cite[Appendix A]{amiri_machine_2020}, the second inequality is the upper bound obtained by making $t$ tend to infinity, $r_n=\sqrt{(d-k_n)/d}<1$ with $k_n$ defined above (\ref{sec:system,ssec:trans,equ:sparse}), $r_n^t$ denotes the $t$-th power of $r_n$; and  $\beta_{n,1}$, $\beta_{n,2}$ are both constants defined in Assumption \ref{sec:conver,asu:assumption 4}.
Then, by plugging (\ref{normalization}) and the definition of $\bm{g}_n^{(t)}$ in (\ref{sec:system,ssec:trans,equ:channel final}) into (\ref{sec:system,ssec:conve,equ:errorn_all}), we have
\begin{align}
	\mathbb{E}[||\bm{e}_{n,2}^{(t)}||^2]= \mathbb{E}\left[\left|\left|\sum_{m=1}^{M} \bm{g}_{nm}^{\operatorname{no}(t)}(v_{nm}^{(t)}-\zeta_n^{(t)}\sqrt{\gamma_n^{(t)}}h_m^{(t)}\alpha_{m}^{(t)})\right|\right|^2\right],
\end{align}
where $v_{nm}^{(t)}$ is defined in (\ref{normalization}), $\zeta_n^{(t)}$ is the normalization scaling factor of task $n$ in (\ref{es con model updating}), $\gamma_n^{(t)}$ is the inter-task power allocation coefficient of task $n$ in (\ref{sec:system,ssec:trans,equ:mapping}), $\alpha_m^{(t)}$ is the power allocation coefficient of device $m$ in (\ref{sec:system,ssec:trans,equ:power_allocation}), and $h_m^{(t)}$ is the channel coefficient of device $m$ in (\ref{sec:system,ssec:trans,equ:channel}). 
Then,
\begin{align}\label{error:e2}
	\mathbb{E}[||\bm{e}_{n,2}^{(t)}||^2]=\sum_{m=1}^{M} \left(v_{nm}^{(t)}-\zeta_n^{(t)}\sqrt{\gamma_n^{(t)}}h_m^{(t)}\alpha_{m}^{(t)}\right)^2.
\end{align}

Combing (\ref{expectation error}) and (\ref{equ:lemma}), we obtain
\begin{align}\label{error analysis}
	\nonumber\mathbb{E}&[\mathcal{L}_n(\bm{\theta}_n^{(t+1)})]-\mathbb{E}[\mathcal{L}_n(\bm{\theta}_n^{(t)})] \\
	\nonumber&\leq\   	\frac{1}{2L_n}	\left(3\left	(||\bm{e}_{n,1}^{(t)}||^2+\mathbb{E}[||\bm{e}_{n,2}^{(t)}||^2]+\mathbb{E}[||\bm{e}_{n,3}^{(t)}||^2]\right)\right.\\
	&\left.-\mathbb{E}[\|\nabla_n \mathcal{L}_n(\bm{\theta}_n^{(t)})\|^{2}]\right).
\end{align}
Plugging (\ref{turbo error}), (\ref{error:e1}), and (\ref{error:e2}) into (\ref{error analysis}) at the $t$-th training round, we have 		
\begin{align}\label{trainloss_optimumloss}
	\mathbb{E}&[\mathcal{L}_n(\bm{\theta}_n^{(t+1)})]-\mathbb{E}[\mathcal{L}_n(\bm{\theta}_n^{(t)})]\\\nonumber&\leq C_n^{(t)} -\frac{\|\nabla_n \mathcal{L}_n(\bm{\theta}_n^{(t)})\|^{2}}{2L_n}\left(1-3\beta_{n,2}\left(\frac{2r_n}{1-r_n}\right)^2\right),
\end{align}
where $C_n^{(t)}$ is defined in (\ref{define:psi,upsilon_C}). Then, from \cite[eq. (2.4)]{friedlander_hybrid_2012}, we have 
\begin{equation}\label{lemma 2.4}
	\mathbb{E}[||\nabla \mathcal{L}_n(\bm{\theta}_n^{(t)})||^{2}] \geq 2 \Omega_n(\mathbb{E}[\mathcal{L}_n(\bm{\theta}_n^{(t)})]-\mathbb{E}[\mathcal{L}_n(\bm{\theta}_n^{(\star)})]).
\end{equation}
Subtracting $\mathcal{L}_n(\bm{\theta}_n^{(\star)})$ on both sides of (\ref{trainloss_optimumloss}) and plugging (\ref{lemma 2.4}) into (\ref{trainloss_optimumloss}), we obtain
\begin{align}\label{equ:theta_star_one}
	\nonumber\mathbb{E}&[\mathcal{L}_n(\bm{\theta}_n^{(t+1)})]-\mathbb{E}[\mathcal{L}_n(\bm{\theta}_n^{(t)})]\\&\leq
	\left(\mathbb{E}[\mathcal{L}_n(\bm{\theta}_n^{(t)})]-\mathbb{E}[\mathcal{L}_n(\bm{\theta}_n^{(\star)})]\right) \Upsilon_n+ C_n^{(t)},
\end{align}
where $\Upsilon_n$ is defined in (\ref{define:psi,upsilon_C}). Combining (\ref{equ:theta_star_one}) with (\ref{total loss}), we obtain
\begin{align}\label{equ:theta_star_two}
	\nonumber\mathbb{E}&[\mathcal{L}(\bm{\Theta}^{(t+1)})]-\mathbb{E}[\mathcal{L}(\bm{\Theta}^{(t)})]\\
	&\leq
	\left(\mathbb{E}[\mathcal{L}(\bm{\Theta}^{(t)})]-\mathbb{E}[\mathcal{L}(\bm{\Theta}^{(\star)})]\right) \max_n\Upsilon_n+ \sum_{n=1}^NC_n^{(t)}.
\end{align}
Applying the above inequality recursively yields (\ref{equ:final_convergence}), which completes the proof.
%\section{Biography Section}
%If you have an EPS/PDF photo (graphicx package needed), extra braces are
% needed around the contents of the optional argument to biography to prevent
% the LaTeX parser from getting confused when it sees the complicated
% $\backslash${\tt{includegraphics}} command within an optional argument. (You can create
% your own custom macro containing the $\backslash${\tt{includegraphics}} command to make things
% simpler here.)
% 
%\vspace{11pt}
%
%\bf{If you include a photo:}\vspace{-33pt}
%\begin{IEEEbiography}[{\includegraphics[width=1in,height=1.25in,clip,keepaspectratio]{fig1}}]{Michael Shell}
%Use $\backslash${\tt{begin\{IEEEbiography\}}} and then for the 1st argument use $\backslash${\tt{includegraphics}} to declare and link the author photo.
%Use the author name as the 3rd argument followed by the biography text.
%\end{IEEEbiography}
%
%\vspace{11pt}
%
%\bf{If you will not include a photo:}\vspace{-33pt}
%\begin{IEEEbiographynophoto}{John Doe}
%Use $\backslash${\tt{begin\{IEEEbiographynophoto\}}} and the author name as the argument followed by the biography text.
%\end{IEEEbiographynophoto}
\bibliographystyle{IEEEtran}
\bibliography{Output}

% Generated by IEEEtran.bst, version: 1.14 (2015/08/26)
\begin{thebibliography}{10}
\providecommand{\url}[1]{#1}
\csname url@samestyle\endcsname
\providecommand{\newblock}{\relax}
\providecommand{\bibinfo}[2]{#2}
\providecommand{\BIBentrySTDinterwordspacing}{\spaceskip=0pt\relax}
\providecommand{\BIBentryALTinterwordstretchfactor}{4}
\providecommand{\BIBentryALTinterwordspacing}{\spaceskip=\fontdimen2\font plus
\BIBentryALTinterwordstretchfactor\fontdimen3\font minus
  \fontdimen4\font\relax}
\providecommand{\BIBforeignlanguage}[2]{{%
\expandafter\ifx\csname l@#1\endcsname\relax
\typeout{** WARNING: IEEEtran.bst: No hyphenation pattern has been}%
\typeout{** loaded for the language `#1'. Using the pattern for}%
\typeout{** the default language instead.}%
\else
\language=\csname l@#1\endcsname
\fi
#2}}
\providecommand{\BIBdecl}{\relax}
\BIBdecl

\bibitem{he_deep_2016}
K.~He, X.~Zhang, S.~Ren, and J.~Sun, ``Deep residual learning for image
  recognition,'' in \emph{Proc. {IEEE} {Comput}. {Soc}. {Conf}. {Comput}.
  {Vision}. {Pattern}. {Recognit}.}, Las Vegas, United states, 2016, pp.
  770--778.

\bibitem{young_recent_2018}
T.~Young, D.~Hazarika, S.~Poria, and E.~Cambria, ``Recent trends in deep
  learning based natural language processing,'' \emph{IEEE Computat. Intell.
  Mag.}, vol.~13, no.~3, pp. 55--75, 2018.

\bibitem{goetz_active_2019}
J.~Goetz, K.~Malik, D.~Bui, S.~Moon, H.~Liu, and A.~Kumar, ``Active federated
  learning,'' \emph{arXiv preprint arXiv:1909.12641}, 2019.

\bibitem{kairouz_advances_2021}
E.~b.~P. Kairouz and H.~B. McMahan, ``\BIBforeignlanguage{English}{Advances and
  open problems in federated learning},''
  \emph{\BIBforeignlanguage{English}{Found. Trends Mach. Learn.}}, vol.~14,
  no.~1, 2021.

\bibitem{lin_deep_2020}
Y.~Lin, S.~Han, H.~Mao, Y.~Wang, and W.~J. Dally, ``Deep gradient compression:
  {Reducing} the communication bandwidth for distributed training,''
  \emph{arXiv preprint arXiv:1712.01887}, 2020.

\bibitem{wangni_gradient_2018}
J.~Wangni, J.~Wang, J.~Liu, and T.~Zhang, ``Gradient sparsification for
  communication-efficient distributed optimization,'' in \emph{Adv. neural inf.
  proces. syst.}, vol.~31, Montreal, Canada, 2018.

\bibitem{sattler_sparse_2019}
F.~Sattler, S.~Wiedemann, K.-R. Müller, and W.~Samek, ``Sparse binary
  compression: {Towards} distributed deep learning with minimal
  communication,'' in \emph{Proc. {Int}. {Joint} {Conf}. {Neural} {Networks}},
  Budapest, Hungary, 2019, pp. 1--8.

\bibitem{konecny_federated_2017}
J.~Konečný, H.~B. McMahan, F.~X. Yu, P.~Richtárik, A.~T. Suresh, and
  D.~Bacon, ``Federated learning: {Strategies} for improving communication
  efficiency,'' \emph{arXiv preprint arXiv:1610.05492}, 2017.

\bibitem{zhu_broadband_2020}
G.~Zhu, Y.~Wang, and K.~Huang, ``Broadband analog aggregation for low-latency
  federated edge learning,'' \emph{arXiv preprint arXiv:1812.11494}, 2020.

\bibitem{amiri_machine_2020}
M.~Amiri and D.~Gündüz, ``Machine learning at the wireless edge:
  {Distributed} stochastic gradient descent over-the-air,'' \emph{IEEE Trans.
  Signal Process.}, vol.~68, pp. 2155--2169, 2020.

\bibitem{amiri_federated_2020}
{M. Amiri and D. Gündüz}, ``Federated learning over wireless fading
  channels,'' \emph{IEEE Trans. Wirel. Commun.}, vol.~19, no.~5, pp.
  3546--3557, 2020.

\bibitem{li_federated_2020}
T.~Li, A.~K. Sahu, A.~Talwalkar, and V.~Smith, ``Federated learning:
  {Challenges}, methods, and future directions,'' \emph{IEEE Signal Process.
  Mag.}, vol.~37, no.~3, pp. 50--60, 2020.

\bibitem{smith_federated_2018}
V.~Smith, C.-K. Chiang, M.~Sanjabi, and A.~Talwalkar, ``Federated multi-task
  learning,'' \emph{arXiv preprint arXiv:1705.10467}, 2018.

\bibitem{zhang_survey_2021}
Y.~Zhang and Q.~Yang, ``A survey on multi-task learning,'' \emph{IEEE Trans.
  Knowl. Data Eng.}, pp. 1--1, 2021.

\bibitem{kumar_learning_2012}
A.~Kumar and H.~Daumé, ``Learning task grouping and overlap in multi-task
  learning,'' in \emph{Proc. {Int}. {Conf}. {Mach}. {Learn}.}, Edinburgh,
  United kingdom, 2012, pp. 1723--1730.

\bibitem{dinh_fedu_2021}
C.~T. Dinh, T.~T. Vu, N.~H. Tran, M.~N. Dao, and H.~Zhang, ``{FedU}: {A}
  unified framework for federated multi-task learning with {Laplacian}
  regularization,'' \emph{arXiv preprint arXiv:2102.07148}, 2021.

\bibitem{dinh_personalized_2021}
C.~T. Dinh, N.~H. Tran, and T.~D. Nguyen, ``Personalized {Federated} learning
  with {Moreau} envelopes,'' \emph{arXiv preprint arXiv:2006.08848}, 2021.

\bibitem{li_ditto_2021}
T.~Li, S.~Hu, A.~Beirami, and V.~Smith, ``Ditto: {Fair} and robust federated
  learning through personalization,'' in \emph{Proc. {Int}. {Conf}. {Mach}.
  {Learn}.}, 2021, pp. 6357--6368.

\bibitem{ma_turbo_2014}
J.~Ma, X.~Yuan, and L.~Ping, ``Turbo compressed sensing with partial {DFT}
  sensing matrix,'' \emph{IEEE Signal Process. Lett.}, vol.~22, no.~2, pp.
  158--161, 2014.

\bibitem{zhang_convex_2012}
Y.~Zhang and D.-Y. Yeung, ``A convex formulation for learning task
  relationships in multi-task learning,'' \emph{arXiv preprint
  arXiv:1203.3536}, 2012.

\bibitem{liu_reconfigurable_2021}
H.~Liu, X.~Yuan, and Y.-J.~A. Zhang, ``Reconfigurable intelligent surface
  enabled federated learning: {A} unified communication-learning design
  approach,'' \emph{arXiv preprint arXiv:2011.10282}, 2021.

\bibitem{seide_1-bit_2014}
F.~Seide, H.~Fu, J.~Droppo, G.~Li, and D.~Yu, ``1-bit stochastic gradient
  descent and its application to data-parallel distributed training of speech
  {DNNs},'' in \emph{Proc. {Annu}. {Conf}. {Int}. {Speech}. {Commun}.
  {Assoc}.}, Singapore, Singapore, 2014, pp. 1058--1062.

\bibitem{ma_performance_2015}
J.~Ma, X.~Yuan, and L.~Ping, ``On the performance of turbo signal recovery with
  partial {DFT} sensing matrices,'' \emph{IEEE Signal Process. Lett.}, vol.~22,
  no.~10, pp. 1580--1584, 2015.

\bibitem{zhong_over--air_2021}
C.~Zhong, H.~Yang, and X.~Yuan, ``Over-the-air multi-task federated learning
  over {MIMO} interference channel,'' \emph{arXiv preprint arXiv:2112.13603},
  2021.

\bibitem{vila_expectation-maximization_2013}
J.~P. Vila and P.~Schniter, ``Expectation-maximization {Gaussian}-mixture
  approximate message passing,'' \emph{IEEE Trans. Signal Process.}, vol.~61,
  no.~19, pp. 4658--4672, 2013.

\bibitem{steven_fundamentals_1993}
K.~Steven, \emph{Fundamentals of {Statistical} {Signal} {Processing}:
  {Estimation} {Theory}}, NJ: Prentice-Hall PTR, 1993.

\bibitem{friedlander_hybrid_2012}
M.~P. Friedlander and M.~Schmidt, ``Hybrid deterministic-stochastic methods for
  data fitting,'' \emph{Siam J. Sci. Comput.}, vol.~34, no.~3, pp.
  A1380--A1405, 2012.

\bibitem{yann_mnist_1998}
\BIBentryALTinterwordspacing
L.~Yann, C.~Corinna, and B.~Christopher, ``The {MNIST} database of handwritten
  digits,'' 1998. [Online]. Available: \url{http://yann.lecun.com/exdb/mnist/}
\BIBentrySTDinterwordspacing

\bibitem{xiao_fashion-mnist_2017}
H.~Xiao, K.~Rasul, and R.~Vollgraf, ``Fashion-{MNIST}: {A} novel image dataset
  for benchmarking machine learning algorithms,'' \emph{arXiv preprint
  arXiv:1708.07747}, 2017.

\bibitem{anguita_public_2013}
D.~Anguita, A.~Ghio, L.~Oneto, X.~Parra, and J.~L. Reyesortiz, ``A public
  domain dataset for human activity recognition using smartphones,'' in
  \emph{Comput. {Intell}. {Mach}. {Learn}.}, Bruges, Belgium, 2013.

\end{thebibliography}

\vfill

\end{document}